\documentclass[11pt]{article}

\usepackage[margin=1.05in]{geometry}
\usepackage{amsmath,amssymb,amsthm,mathtools}
\usepackage{booktabs,array}
\usepackage{enumitem}
\usepackage{microtype}
\usepackage[numbers,sort&compress]{natbib}
\usepackage[hidelinks]{hyperref}
\hypersetup{
  pdftitle={Unrestricted Boolean Multiplicative Complexity of Four-Term Binary Polynomial Multiplication},
  pdfauthor={Gregory Morse},
  pdfsubject={Unrestricted XOR--AND lower bounds for binary polynomial multiplication},
  pdfkeywords={multiplicative complexity, XOR--AND circuits, polynomial multiplication, Hasse jets}
}

\newtheorem{theorem}{Theorem}[section]
\newtheorem{lemma}[theorem]{Lemma}
\newtheorem{proposition}[theorem]{Proposition}

\newtheorem{definition}[theorem]{Definition}
\newtheorem{remark}[theorem]{Remark}
\newtheorem{conjecture}[theorem]{Conjecture}

\newcommand{\F}{\mathbf F_2}
\newcommand{\Aff}{\mathrm{Aff}}
\newcommand{\Span}{\operatorname{span}}
\newcommand{\Ann}{\operatorname{Ann}}
\newcommand{\Supp}{\operatorname{Supp}}
\newcommand{\rk}{\operatorname{rank}}
\newcommand{\MC}{\operatorname{MC}_{\wedge}}
\newcommand{\Mul}{\operatorname{Mul}}

\newcommand{\pihi}{\pi_{\ge 3}}
\newcommand{\Bool}{\mathcal B}

\title{\bf Unrestricted Boolean Multiplicative Complexity of\\
Four-Term Binary Polynomial Multiplication\\
\large Rational Places, Hasse Jets, and the Failure of Nonlinear Feedback}
\author{Gregory Morse\\
E{\"o}tv{\"o}s Lor{\'a}nd University}
\date{}

\begin{document}
\maketitle

\begin{abstract}
Classical lower bounds show that multiplying two degree-three polynomials
over $\mathbb F_2$ requires nine scalar products in bilinear or quadratic
models. They do not settle unrestricted Boolean multiplicative complexity:
an XOR--AND circuit may reuse nonlinear intermediate wires, and Boolean
equality is taken modulo $x_i^2=x_i$, so a multiplication can lower algebraic
degree. Let $\operatorname{Mul}_4:\mathbb F_2^8\to\mathbb F_2^7$ output the
seven coefficients of the product of two four-term binary polynomials. We
prove that its unrestricted XOR--AND multiplicative complexity is exactly
nine. This resolves, for a natural vector-valued quadratic function, the
Boyar--Find question of whether a quadratic-circuit lower bound can persist
against unrestricted nonlinear reuse. The proof is structural rather than
exhaustive. A useful purely quadratic prefix is forced onto the three rational
places of $\mathbb P^1(\mathbb F_2)$. In a hypothetical eight-AND circuit, the
unique non-useful gate must carry a cubic high part. Any useful continuation
then forces a rational tangent and exposes a first Hasse jet, while exterior
jet separation together with Boolean idempotence prevents the same defect
from exposing the second Hasse jet. The required useful suffix therefore
cannot exist. A complete Lean 4 formalization verifies the Boolean-ANF
semantics, the unrestricted circuit model, and the exact theorem; it uses no
project-specific axiom or native decision procedure. The same zero-defect flag
argument gives multiplicative complexity six for three-term multiplication,
and the method isolates the multi-defect obstruction for five terms.
\end{abstract}

\section{Introduction}

Multiplication of short polynomials over finite fields is a classical
test problem for multiplicative complexity.  Bilinear and quadratic
algorithms are tightly connected with Hankel matrices, linear-recurring
sequences, interpolation, finite-field places, and coding theory; see,
among others, Lempel--Seroussi--Winograd~\cite{LSW83},
Kaminski~\cite{Kaminski85}, Bshouty--Kaminski~\cite{BK90}, and
Averbuch--Bshouty--Kaminski~\cite{ABK92}.  For two degree-three
polynomials over $\F$, the tight value nine is classical
\cite{Kaminski85}.  Find--Peralta recover and exploit the corresponding
coding-theoretic lower bounds in the language of \emph{quadratic
Boolean circuits}, while their circuit search is restricted further to
symmetric bilinear circuits~\cite{FindPeralta19}.

There is, however, a different Boolean question.  Let the coefficients
$a_i,b_j$ be Boolean inputs and identify functions by their values on
$\F^8$, equivalently in the ANF quotient
\[
 \F[a_0,\ldots,a_3,b_0,\ldots,b_3]/(x_i^2-x_i).
\]
An unrestricted XOR--AND circuit may multiply arbitrary XOR
combinations of \emph{previous nonlinear wires}.  Such a multiplication
can create cubic or quartic terms and, because $x_i^2=x_i$, can also
contract higher-degree expressions back into degree two.  None of this
is present in a bilinear or quadratic circuit.

This distinction is not terminological.  Boyar--Find define
multiplicative complexity as the minimum number of AND gates in an
arbitrary XOR--AND circuit and ask whether, over finite fields and in
particular over $\F$, quadratic vector-valued Boolean functions can
always be computed optimally by quadratic circuits
\cite{BoyarFind18}.  They note that Mirwald--Schnorr proved quadratic
circuits optimal for one- and two-output quadratic Boolean functions,
but that the general $(n,m)$ case is open~\cite{MirwaldSchnorr92}.
The algebraic straight-line lower bounds of Bshouty--Kaminski
\cite{BshoutyKaminski06} concern computation of polynomial
coefficients as formal field expressions; they do not identify Boolean
functions modulo $x_i^2-x_i$ and therefore do not settle this question.

\subsection*{Four models that must not be conflated}

For the present function the relevant hierarchy is
\[
 \text{bilinear}\subseteq\text{quadratic Boolean}
 \subseteq\text{unrestricted Boolean},
\]
while algebraic straight-line complexity is a related but different
formal-polynomial model.  Concretely:

\begin{center}
\begin{tabular}{p{0.20\linewidth}p{0.34\linewidth}p{0.34\linewidth}}
\toprule
model & allowed multiplication & equality notion\\
\midrule
bilinear
 & $\ell(a)\,\ell'(b)$
 & polynomial/Boolean equality; no nonlinear reuse\\
quadratic Boolean
 & $\Sigma\Pi\Sigma$ XOR--AND; equivalently every gate computes a Boolean function of algebraic degree at most two
 & equality of Boolean functions\\
algebraic straight-line
 & arbitrary previous field expressions, possibly with division
 & formal polynomial/rational identity over the field\\
unrestricted Boolean
 & arbitrary previous XOR wires, including nonlinear intermediates
 & Boolean equality modulo $x_i^2=x_i$\\
\bottomrule
\end{tabular}
\end{center}

The classical nine-multiplication lower bound belongs to the first
polynomial/quadratic traditions in the appropriate formulations.  The
theorem of this paper is the last row:
\[
 \boxed{\text{nonlinear Boolean feedback still cannot beat nine ANDs}.}
\]

Our proof explains why.  The target contains exactly three rank-one
Hankel directions over $\F$, the rational places of
$\mathbb P^1(\F)$.  A hypothetical eight-AND computation has exactly
one non-useful quotient direction.  After normalization, that defect
can interact profitably with the target only as a rational tangent.
It can expose the first Hasse jet at that place, but not the second.
Since four suffix gates would all have to be useful, the circuit dies
after the first feedback step.

The proof is algebraic.  It has also been formalized end to end in Lean~4,
from Boolean functions and unrestricted circuit semantics through the
seven- and eight-gate exclusions and the explicit nine-gate upper bound.
The resulting kernel-checked theorem is the same unrestricted equality
stated here.  Separate small programs used during development remain as
regression tests, but neither their code nor their recorded outputs are a
logical premise of the theorem.  The pinned public formalization and
regression release is identified after the proof.

\paragraph{Contribution and scope.}
The new result is the unrestricted Boolean equality
\[
 \MC(\Mul_4)=9.
\]
An accompanying contribution is a complete formal proof of the exact
frontier
\[
 \bigl(\MC(\Mul_0),\MC(\Mul_1),\MC(\Mul_2),
        \MC(\Mul_3),\MC(\Mul_4)\bigr)=(0,1,3,6,9)
\]
in Lean~4~\cite{deMouraUllrich21}, using mathlib~\cite{Mathlib20}.
In particular, the formal development does not merely check the upper-bound
circuit or isolated finite cases: it exports the unrestricted lower bound
and the exact equality for $n=4$ in the same circuit model as the paper.
The value nine was previously known for restricted polynomial-
multiplication/quadratic models; what is proved here is that the
additional power of arbitrary Boolean nonlinear reuse does not lower
it.  Proposition~\ref{prop:smallbase} settles the elementary cases
of zero-, one-, and two-term multiplication directly, while
Proposition~\ref{prop:n3} supplies the three-term lifting with zero
defect budget.  To our knowledge,
$\Mul_4$ is the first explicit polynomial-multiplication instance in
this family where a \emph{nonzero} defect budget must be controlled
to lift a nontrivial quadratic-circuit lower bound to unrestricted
Boolean circuits.  The proof mechanism---rational-place normalization,
feedback annihilators, and Hasse-jet saturation---is intended to
transfer to multi-defect instances rather than merely reprove a small
bilinear rank.

\section{The Boolean circuit flag}\label{sec:model}

Let
\[
 \Bool=\F[a_0,\ldots,a_3,b_0,\ldots,b_3]/
          (x^2-x:x\in\{a_i,b_j\})
\]
be the Boolean ANF algebra.  XOR and constants are free.  An AND gate
takes two functions already available in the current $\F$-linear
wire space and adjoins their Boolean product.

\begin{definition}[unrestricted Boolean multiplicative complexity]
An \emph{unrestricted XOR--AND circuit} starts from the input variables
and constants, allows arbitrary XORs at zero cost, and at each AND gate
multiplies any two functions in the current wire span.  For a
vector-valued Boolean function $F$, define
\[
 \MC(F)=\min_C\#\{\text{AND gates of }C\},
\]
where the minimum ranges over unrestricted XOR--AND circuits computing
$F$.

Following Boyar--Find, a circuit is \emph{quadratic} when it is both
a $\Sigma\Pi\Sigma$ circuit and an XOR--AND circuit.  They state that
this is equivalent to requiring every gate of the circuit to compute
a Boolean function of algebraic degree at most two
\cite[Sec.~2]{BoyarFind18}.  A quadratic circuit is \emph{bilinear}
when, after partitioning the inputs into the $a$- and $b$-variables,
each input to an AND gate is a linear combination from one side of the
partition and the other input is a linear combination from the other
side~\cite[Sec.~2]{BoyarFind18}.
\end{definition}

The outputs of $\Mul_4$ are quadratic functions, but an unrestricted
circuit computing them is not assumed to remain quadratic internally.
This is the model distinction on which the theorem depends.

Write
\[
 E_s=\sum_{\substack{0\le i,j\le3\\i+j=s}}a_i b_j,
 \qquad 0\le s\le6,
\]
and
\[
 T=\Span\{E_0,\ldots,E_6\}.
\]
The seven $E_s$ are linearly independent.  Let $\Aff$ be the
nine-dimensional space spanned by $1$ and the eight input variables,
and put $W=\Aff+T$.

If $V$ is the current wire space, define its \emph{target dimension}
by
\[
 d_T(V)=\dim\bigl((V\cap W)/\Aff\bigr).
\]
An AND gate is \emph{useful} if $d_T$ rises by one, and
\emph{non-useful} otherwise.

\begin{lemma}[one gate buys at most one target dimension]\label{lem:one}
If $V'=V+\langle g\rangle$, then
\[
 d_T(V')-d_T(V)\in\{0,1\}.
\]
\end{lemma}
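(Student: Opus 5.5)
The plan is to reduce the statement to an elementary dimension count for subspaces, using only that the wire space always contains $\Aff$. Every wire space arising in a circuit starts from the input variables and constants, and is closed under XOR. Hence $\Aff\subseteq V$, and also $\Aff\subseteq W=\Aff+T$, so $\Aff\subseteq V\cap W$. The quotient $(V\cap W)/\Aff$ is therefore well defined, and $d_T(V)=\dim(V\cap W)-9$. The same holds for $V'\supseteq V$. The claim thus becomes
\[
 \dim(V'\cap W)-\dim(V\cap W)\in\{0,1\}.
\]

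Monotonicity gives the lower bound: $V\subseteq V'$ implies $V\cap W\subseteq V'\cap W$, so the difference is nonnegative. For the upper bound I will use the inclusion $V'\cap W\hookrightarrow V'$ followed by the projection $V'\to V'/V$. The restricted map $V'\cap W\to V'/V$ has kernel $(V'\cap W)\cap V=V\cap W$, because $V\subseteq V'$. It therefore induces an injection
\[
 (V'\cap W)/(V\cap W)\hookrightarrow V'/V .
\]
Since $V'=V+\langle g\rangle$, we have $\dim V'/V\le 1$, with value $0$ exactly when $g\in V$. Hence $\dim(V'\cap W)/(V\cap W)\le 1$. By additivity of dimension in the tower $\Aff\subseteq V\cap W\subseteq V'\cap W$, this gives $d_T(V')-d_T(V)\le 1$.

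All spaces here are finite-dimensional, since $\Bool$ has dimension $2^8$, so there are no subtleties with the dimension arithmetic. The only point needing care, which I regard as the main obstacle, is the standing hypothesis $\Aff\subseteq V$: without it the quotient by $\Aff$ is not defined. I will state explicitly that every wire space of an unrestricted XOR--AND circuit contains the inputs and the constant $1$, since these are the initial wires.
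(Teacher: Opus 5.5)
Your proposal is correct and is essentially the paper's proof. Both arguments use $\Aff\subseteq V\subseteq V'$ to rewrite the increase as $\dim\bigl((V'\cap W)/(V\cap W)\bigr)$, then bound it by the injection into $V'/V$, which has dimension at most one. Your explicit justification of $\Aff\subseteq V$ (the inputs and the constant are the initial wires) is the same standing hypothesis the paper uses.
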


\begin{proof}
Since $\Aff\subseteq V\subseteq V'$, the increase equals
\[
 \dim(V'\cap W)-\dim(V\cap W)
 =\dim\bigl((V'\cap W)/(V\cap W)\bigr).
\]
The natural map
\[
 (V'\cap W)/(V\cap W)\longrightarrow V'/V
\]
is injective: its kernel is
\[
 \bigl((V'\cap W)\cap V\bigr)/(V\cap W)
 =(V\cap W)/(V\cap W)=0.
\]
Therefore the increase is at most $\dim(V'/V)\le1$, and it is
nonnegative.
\end{proof}

\subsection{The base frontier: zero through three terms}

For $n\geq 1$, let
\[
 \Mul_n:\F^{2n}\longrightarrow\F^{2n-1}
\]
send the coefficients of
\[
 A(z)=\sum_{i=0}^{n-1}a_i z^i,
 \qquad
 B(z)=\sum_{j=0}^{n-1}b_j z^j
\]
to the $2n-1$ coefficients
\[
 E_s=\sum_{i+j=s}a_i b_j,
 \qquad 0\leq s\leq 2n-2,
\]
of $A(z)B(z)$.  For completeness, let $\Mul_0$ be the unique map
between zero-dimensional vector spaces.

\begin{proposition}[elementary base cases]\label{prop:smallbase}
In the unrestricted Boolean model,
\[
 \boxed{\MC(\Mul_0)=0,\qquad
        \MC(\Mul_1)=1,\qquad
        \MC(\Mul_2)=3.}
\]
\end{proposition}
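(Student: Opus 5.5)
Upper bounds come first and are explicit. $\Mul_0$ has no outputs, so the empty circuit works. $\Mul_1$ needs the single gate $a_0b_0$. For $\Mul_2$ we use Karatsuba: compute $a_0b_0$, $a_1b_1$, and $(a_0+a_1)(b_0+b_1)$. Then $E_1=(a_0+a_1)(b_0+b_1)+a_0b_0+a_1b_1$ is obtained by free XORs. This gives three AND gates.

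For the lower bounds we apply Lemma~\ref{lem:one} in the corresponding setting, where $T=\Span\{E_0,\dots,E_{2n-2}\}$ and $W=\Aff+T$. A circuit computing $\Mul_n$ ends with a wire space $V\supseteq \Aff+T$, so $d_T(V)=\dim T=2n-1$, assuming the $E_s$ are linearly independent modulo $\Aff$. They are, because they are quadratic forms with disjoint monomial supports. The initial wire space is $\Aff$, with $d_T=0$. By Lemma~\ref{lem:one} each AND gate raises $d_T$ by at most one, so at least $2n-1$ gates are needed. This gives $\MC(\Mul_1)\ge1$ and $\MC(\Mul_2)\ge3$ immediately. $\Mul_0$ is trivially at least $0$.

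There is a subtlety we should check. The counting argument does not need any gate to stay quadratic. Lemma~\ref{lem:one} holds for arbitrary $g\in\Bool$, including the cubic or quartic products and the degree contractions under $x_i^2=x_i$ that unrestricted circuits allow. The only facts used are $\Aff\subseteq V$, which holds because constants and inputs are free, and that the outputs must lie in the final $V$. So the bound $2n-1$ is robust against nonlinear feedback. It coincides with the upper bound exactly for $n\le2$, since $2\cdot2-1=3$.

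We do not expect a real obstacle. The step needing the most care is independence of the $E_s$ modulo $\Aff$ in the Boolean quotient $\Bool$. Each $E_s$ is a sum of distinct multilinear monomials $a_ib_j$, and no two $E_s$ share a monomial. Reduction modulo $x^2=x$ therefore cannot collapse them into affine terms, and $\dim\bigl((\Aff+T)/\Aff\bigr)=2n-1$. The genuine difficulty, where the dimension count $2n-1$ is strictly below the true value, only begins at $n=3$ and is handled separately by the defect-budget argument of Proposition~\ref{prop:n3}.
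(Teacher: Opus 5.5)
Your proposal is correct and follows the paper's proof: explicit circuits (empty, $a_0b_0$, Karatsuba) give the upper bounds, and a dimension count modulo $\Aff_n$ gives the lower bound $2n-1$, using the disjoint quadratic monomial supports of the $E_s$. The difference is cosmetic. You route the count through $d_T$ and Lemma~\ref{lem:one}, while the paper bounds $\dim(V/\Aff_n)\le r$ directly; both rest on the fact that each AND gate adjoins at most one new coset, whatever its degree.
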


\begin{proof}
The value for $n=0$ is the stated empty-input convention.  For
$n\geq1$, let $\Aff_n$ be the span of the constant function and the
$2n$ input variables.  Starting from $\Aff_n$, each AND gate adjoins
at most one new coset modulo $\Aff_n$.  Hence a circuit with $r$ AND
gates has wire-space quotient dimension at most $r$ modulo $\Aff_n$.

The $2n-1$ target coordinates $E_s$ are linearly independent modulo
$\Aff_n$: the quadratic monomial $a_i b_j$ occurs in exactly the
coordinate $E_{i+j}$, so distinct coordinates have disjoint nonempty
quadratic monomial supports.  Every circuit for $\Mul_n$ therefore
uses at least $2n-1$ AND gates.  This gives the lower bounds one and
three for $n=1$ and $n=2$.

For $n=1$, the single product $a_0b_0$ attains the bound.  For $n=2$,
use
\[
 p_0=a_0b_0,\qquad p_1=a_1b_1,\qquad
 p_2=(a_0+a_1)(b_0+b_1).
\]
Then
\[
 E_0=p_0,\qquad E_1=p_0+p_1+p_2,\qquad E_2=p_1,
\]
so three AND gates suffice.
\end{proof}

The next case is the first for which the target-dimension lower bound
is not tight.  It is also the last case whose attempted improvement
has zero defect budget.

\subsubsection*{Zero-defect lifting for three terms}

Specialize the preceding definition to multiplication of
\[
 a_0+a_1z+a_2z^2
 \qquad\text{and}\qquad
 b_0+b_1z+b_2z^2.
\]

\begin{proposition}[the zero-defect case]\label{prop:n3}
The unrestricted Boolean multiplicative complexity of three-term
binary polynomial multiplication is
\[
 \boxed{\MC(\Mul_3)=6}.
\]
\end{proposition}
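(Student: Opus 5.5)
The plan is to prove the two inequalities separately. The upper bound $\MC(\Mul_3)\le 6$ comes from an explicit Karatsuba-type circuit with
\[
 p_0=a_0b_0,\quad p_1=a_1b_1,\quad p_2=a_2b_2,\quad
 p_{01}=(a_0+a_1)(b_0+b_1),
\]
\[
 p_{02}=(a_0+a_2)(b_0+b_2),\quad p_{12}=(a_1+a_2)(b_1+b_2).
\]
The outputs are then
\[
 E_0=p_0,\quad E_1=p_{01}+p_0+p_1,\quad E_2=p_{02}+p_0+p_1+p_2,
\]
\[
 E_3=p_{12}+p_1+p_2,\quad E_4=p_2.
\]
Checking these is routine; for instance, $p_{02}+p_0+p_2=a_0b_2+a_2b_0$, and adding $p_1$ gives $E_2$.

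For the lower bound, the argument of Proposition~\ref{prop:smallbase} already forces five AND gates, since the five $E_s$ are independent modulo $\Aff$. So suppose a circuit uses exactly five. Lemma~\ref{lem:one} shows that $d_T$ must rise at every gate, so all five gates are useful: this is the ``zero defect budget''. I would first prove by induction that every wire space $V_k$ satisfies $V_k\subseteq W$. The $k$-th gate adjoins some $g$ with $(V_k\cap W)\not\subseteq V_{k-1}$. If $w=v+g$ is a new element of $W$ with $v\in V_{k-1}\subseteq W$, then $g=w-v\in W$. Hence every gate has inputs $u,v\in W$, its product $uv$ lies in $W$, and $uv\notin V_{k-1}$.

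The main step, and the one I expect to be the obstacle, is a \emph{no-feedback lemma}. It should say that if $u,v\in W$ and $uv\in W\setminus V_{k-1}$, then modulo $V_{k-1}$ the product can be replaced by a product of two affine forms. The point is that nonlinear reuse inside a zero-defect circuit buys nothing. I would write $u=\alpha+q$ and $v=\beta+r$ with $\alpha,\beta\in\Aff$ and $q,r\in T$, and expand
\[
 uv=\alpha\beta+\alpha r+\beta q+qr .
\]
Then I would compare the degree-$3$ and degree-$4$ parts of the ANF, after reducing by $x^2=x$. The terms $\alpha r$, $\beta q$ and $qr$ would be analysed by restricting to affine subspaces, for example fixing all $b_j$. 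On such a restriction each $E_s$ becomes a linear form in the $a_i$, so their Hankel structure can be read off directly. The goal is to show that all cubic and quartic parts cancel only in degenerate situations. These are: $q=0$ or $r=0$; $u$ and $v$ agreeing modulo $\Aff$, where idempotence gives $u^2=u$ and so $uv$ is a product of affine pieces times an existing wire; or a product such as $\alpha q$ that already lies in $V_{k-1}$ up to affine multiples. In each case the quadratic part of $uv$ modulo $V_{k-1}$ equals that of a product $\ell\ell'$ of affine forms. If the direct degree comparison becomes messy, I would first try a Fourier/evaluation argument: test the identity on the points where all $b_j$ are fixed, and exploit that a cubic $\alpha r$ survives unless $\alpha$ is constant on $\Supp r$.

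Granting the lemma, the five-gate circuit yields five affine-form products whose quadratic parts, modulo $\Aff$, span $T=\Span\{E_0,\dots,E_4\}$. That would be a quadratic circuit for $\Mul_3$ with five multiplications. This contradicts the classical quadratic lower bound of six for three-term multiplication over $\F$ \cite{Kaminski85,FindPeralta19}. If I want the proof self-contained, I would instead run the rank-one count directly: each product $\ell(a,b)\ell'(a,b)$ contributes, modulo $\Aff$, a symmetric bilinear form of rank at most two, and its $a$--$b$ block is a Hankel-compatible element of $T$. Only three rank-one Hankel directions exist, namely the rational places $0,1,\infty$ of $\mathbb P^1(\F)$: $E_0$, $E_4$ and $E_0+E_1+E_2+E_3+E_4$. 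A counting argument on the Hankel rank of the $3\times3$ matrices then rules out spanning the five-dimensional $T$ with five such products.
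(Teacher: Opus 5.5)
Your upper bound is the paper's circuit. Your lower-bound skeleton is also exactly the paper's: five independent target cosets, all five gates useful by Lemma~\ref{lem:one}, and the induction giving $V_k\subseteq W$. The paper stops there. Every gate then has degree at most two, which it identifies with a quadratic circuit via the Boyar--Find equivalence, and it quotes the Find--Peralta bound of six.

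You instead make the reduction to a $\Sigma\Pi\Sigma$ circuit your own ``no-feedback lemma''. Your caution is reasonable, since a degree-two gate fed by nonlinear wires is not literally $\Sigma\Pi\Sigma$. As written, however, this step is a genuine gap. You only assert that the cubic and quartic parts of $(\alpha+q)(\beta+r)$ cancel ``only in degenerate situations''. For an arbitrary state $S=V_{k-1}\cap T$ your list of such situations is not exhaustive. Indeed $E_0\wedge E_1=0$ although $E_0,E_1$ are independent. With $\alpha=a_0+b_0$, $\beta=0$, $q=E_0$, $r=E_1$ one gets $(\alpha+q)r=E_1$, and here the cubic parts of $\alpha r$ and $qr$ cancel each other. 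That product happens not to be useful, but nothing in your sketch shows that such cancellations never produce a new direction.

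The missing idea is an inductive invariant: carry $S_{k-1}\le R=\langle E_0,E_4,E_0+\cdots+E_4\rangle$.
\begin{itemize}
\item In the basis $a_0,\ a_0+a_1+a_2,\ a_2$ (and likewise for $b$), the three places have disjoint supports. Hence their pairwise wedges are independent, and $q\wedge r=0$ forces $q=\varepsilon G$, $r=\delta G$.
\item The cubic part of $uv$ is then $N\wedge G$, with $N$ the linear part of $\delta\alpha+\varepsilon\beta$. This forces $N=0$ if $G$ has rank at least four. If $G=r_\theta$, it forces $N\in P_\theta$, in which case $NG\in\langle r_\theta\rangle$.
\item Together with $G^2=G$ this gives $uv\equiv\alpha\beta\pmod{V_{k-1}}$. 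This is precisely the argument of Proposition~\ref{prop:prefix}.
\item Since $uv$ and $V_{k-1}$ lie in $W$, so does $\alpha\beta$. Its quadratic part is therefore a decomposable target form, hence a rational place by the $3\times3$ analogue of Lemma~\ref{lem:rankone}. So $S_k\le R$ persists.
\end{itemize}

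Completed this way, your route is self-contained and stronger than needed. A zero-defect circuit can only add the three rational places, so the fourth gate already fails, and the bound of six from Kaminski or Find--Peralta is never used.

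The invariant also repairs your last paragraph. Its claim that the $a$--$b$ block of an arbitrary product $\ell\ell'$ lies in $T$ is false; take $a_0b_1$. The claim holds here only because each product lies in $W$.
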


\begin{proof}
The five output coefficients are linearly independent modulo affine
functions: their degree-two monomial supports are disjoint, and no
nonzero quadratic target combination is affine.  Hence the target
quotient has dimension five.  If a five-AND circuit computed
$\Mul_3$, Lemma~\ref{lem:one} would force all five gates to be useful.  Let $W_3$ be affine functions plus the
five target coordinates.  Starting from $V_0\subset W_3$, usefulness
and $\dim(V_j/V_{j-1})=1$ imply inductively that $V_j\subset W_3$:
if the new gate lay outside $W_3$, the intersection with $W_3$ could
not increase.  Thus every gate computes a Boolean function of degree at most two.
By Boyar--Find's equivalence this is a quadratic circuit
\cite[Sec.~2]{BoyarFind18}.  Find--Peralta's lower-bound section states
that any quadratic Boolean circuit for $n$-term polynomial
multiplication induces the corresponding linear code; their table gives
the lower bound six for $n=3$
\cite[Sec.~5, Table~3]{FindPeralta19}.  Contradiction.

Six ANDs suffice.  With
\[
\begin{array}{lll}
p_0=a_0b_0,&p_1=a_1b_1,&
p_2=(a_0+a_1)(b_0+b_1),\\
p_3=a_2b_2,&
p_4=(a_0+a_2)(b_0+b_2),&
p_5=(a_1+a_2)(b_1+b_2),
\end{array}
\]
the five coefficients are
\[
\begin{aligned}
E_0&=p_0,\\
E_1&=p_0+p_1+p_2,\\
E_2&=p_0+p_1+p_3+p_4,\\
E_3&=p_1+p_3+p_5,\\
E_4&=p_3.
\end{aligned}
\]
\end{proof}

Combining Propositions~\ref{prop:smallbase} and~\ref{prop:n3}, the
complete unrestricted base frontier is
\[
 \bigl(\MC(\Mul_0),\MC(\Mul_1),\MC(\Mul_2),\MC(\Mul_3)\bigr)
   =(0,1,3,6).
\]
Proposition~\ref{prop:n3} is the $m>2$ Boolean lifting with \emph{zero} defect
budget: five gates for five target dimensions leave no room for a
non-useful direction.  The four-term case is the first instance in
this family where a circuit below the quadratic optimum has a
nonzero defect budget and nonlinear feedback must be ruled out
structurally.

Consequently, any eight-AND circuit computing $\Mul_4$ has exactly
one non-useful gate.

\subsection{Exterior top-degree bookkeeping}

Let
\[
 L=\Span\{a_0,\ldots,a_3,b_0,\ldots,b_3\}.
\]
For highest homogeneous components we identify a squarefree monomial
$x_{i_1}\cdots x_{i_d}$ with
$x_{i_1}\wedge\cdots\wedge x_{i_d}\in\Lambda^dL$.
If two Boolean ANFs are multiplied, their top component is the
ordinary exterior product; repeated variables produce lower-degree
``contraction'' terms in the Boolean algebra.

There is one characteristic-two point that is easy to misread.

\begin{remark}[wedge square versus Pfaffian in characteristic two]\label{rem:square}
For every $G\in\Lambda^2L$,
\[
 G\wedge G=0.
\]
Indeed, every monomial square is zero in the exterior algebra and each
cross term occurs twice.  Thus the ordinary wedge square carries
\emph{no} decomposability information in characteristic two.  The
Pfaffian, or equivalently the divided-square quadratic form, is a
different operation.  We never use $G\wedge G$ as a rank detector.
\end{remark}

For a two-form $\omega$, $\Supp(\omega)$ denotes the support of its
alternating matrix, equivalently the smallest subspace $U\le L$ with
$\omega\in\Lambda^2U$.

\begin{lemma}[rank-four support]\label{lem:support}
If $\omega$ has alternating rank four,
$U\le L$ has dimension two, and
$\omega\in U\wedge L$, then
\[
 U\subseteq\Supp(\omega).
\]
\end{lemma}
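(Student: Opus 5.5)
The plan is to write $\omega=\sum_{k} u_k\wedge \ell_k$ with $u_1,u_2$ a basis of $U$ and $\ell_1,\ell_2\in L$; this is possible because $\omega\in U\wedge L$. The proof then reduces to a rank computation. An alternating form of the shape $u_1\wedge\ell_1+u_2\wedge\ell_2$ has support contained in $\Span\{u_1,u_2,\ell_1,\ell_2\}$, which has dimension at most four. Since $\omega$ has rank four, its support $\Supp(\omega)$ is exactly four-dimensional.

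First, $\Supp(\omega)\subseteq\Span\{u_1,u_2,\ell_1,\ell_2\}$, because $\omega\in\Lambda^2$ of that span. That span has dimension at most four, while $\dim\Supp(\omega)=\rk\omega=4$. Hence equality holds:
\[
 \Supp(\omega)=\Span\{u_1,u_2,\ell_1,\ell_2\}.
\]
In particular $u_1,u_2\in\Supp(\omega)$, so $U\subseteq\Supp(\omega)$.

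Two routine facts should be recorded explicitly, since characteristic two (Remark~\ref{rem:square}) is the place where one might worry.

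The first is that the support of an alternating form (the image of its alternating matrix, viewed as a map $L^*\to L$ via contraction) has dimension equal to its rank. I would use the contraction description: for $\phi\in L^*$,
\[
 \iota_\phi(u\wedge\ell)=\phi(u)\ell-\phi(\ell)u ,
\]
so the image of $\phi\mapsto\iota_\phi\omega$ is the smallest $U'$ with $\omega\in\Lambda^2U'$. The dimension of this image is the rank of the associated matrix. This uses no division by two, so it is valid over $\F$.

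The second is that $\omega\in\Lambda^2 S$ forces $\Supp(\omega)\subseteq S$. This is immediate from the contraction description, because each $\iota_\phi(\omega)$ lies in $S$.

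The main (mild) obstacle is to make sure the argument never uses $\omega\wedge\omega\ne0$ as a rank-four witness. By Remark~\ref{rem:square} that quantity always vanishes over $\F$. The argument above avoids it entirely: it relies only on the contraction and image-dimension characterization of rank, which is characteristic-free.
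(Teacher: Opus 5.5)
Your proof is correct and follows essentially the paper's argument. Both write $\omega=u_1\wedge x+u_2\wedge y$ (the paper keeps a separate $\lambda\,u_1\wedge u_2$ term) and use rank four to force the support to fill the at-most-four-dimensional span $\Span\{u_1,u_2,x,y\}$. Your version also states explicitly two steps the paper leaves implicit: $\Supp(\omega)\subseteq\Span\{u_1,u_2,\ell_1,\ell_2\}$ and $\dim\Supp(\omega)=\rk\omega$.
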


\begin{proof}
Choose a basis $u_1,u_2$ of $U$ and write
\[
 \omega=u_1\wedge x+u_2\wedge y+\lambda u_1\wedge u_2.
\]
Rank four implies that the images of $x,y$ modulo $U$ are independent.
Hence both $u_1$ and $u_2$ occur in the four-dimensional support.
\end{proof}

\section{Hankel target geometry}\label{sec:hankel}

For
\[
 t=\sum_{s=0}^6c_sE_s\in T
\]
define the $4\times4$ Hankel matrix
\[
 H(t)=
 \begin{pmatrix}
 c_0&c_1&c_2&c_3\\
 c_1&c_2&c_3&c_4\\
 c_2&c_3&c_4&c_5\\
 c_3&c_4&c_5&c_6
 \end{pmatrix}.
\]
In the decomposition
$L=A\oplus B$,
$A=\Span\{a_i\}$, $B=\Span\{b_j\}$, the alternating matrix of $t$ is
\[
 M(t)=
 \begin{pmatrix}
 0&H(t)\\
 H(t)^T&0
 \end{pmatrix}.
\]
Therefore
\begin{equation}\label{eq:rankdouble}
 \rk M(t)=2\,\rk H(t).
\end{equation}
This block identity, rather than $H+H^T$, is the source of the factor
two in characteristic two.

\subsection{The three rational places}

Set
\[
 r_0=E_0=a_0b_0,\qquad
 r_\infty=E_6=a_3b_3,\qquad
 r_1=\sum_{s=0}^6E_s=A(1)B(1),
\]
and
\[
 R=\Span\{r_0,r_1,r_\infty\}.
\]
These are the three $\F$-rational points of the rank-one Hankel curve.

\begin{lemma}[rank-one target forms]\label{lem:rankone}
The nonzero decomposable elements of $T$ are exactly
\[
 r_0,\quad r_1,\quad r_\infty.
\]
\end{lemma}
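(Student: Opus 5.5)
Decomposable in $\Lambda^2L$ means alternating rank two. By \eqref{eq:rankdouble}, a nonzero $t\in T$ is decomposable exactly when $\rk M(t)=2$, that is, when $\rk H(t)=1$. The plan is therefore to classify the rank-one $4\times4$ Hankel matrices over $\F$, and then to check that the three listed elements are decomposable.

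For the classification, let $H(t)$ have rank one. Since $H(t)$ is symmetric, rank one over $\F$ forces $H(t)=vv^T$ for some nonzero $v\in\F^4$: write $H=uw^T$ and use symmetry to get $u$ proportional to $w$, with the only scalar being $1$. Hence $H_{ij}=v_iv_j$. The Hankel condition says $H_{ij}$ depends only on $i+j$.

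From $H_{03}=H_{12}$ we get $v_0v_3=v_1v_2$. From the diagonal entries we get $v_1=v_1v_1=H_{11}=H_{02}=v_0v_2$. Likewise $v_2=v_1v_3$ and $v_2=H_{22}=H_{13}=v_1v_3$, which is consistent, and $H_{11}=H_{02}$ together with $H_{22}=H_{13}$ gives the rest.

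We now split on $v_0$ and $v_3$.
\begin{itemize}
\item If $v_0=0$, then $v_1=0$. Then $v_2=v_1v_3=0$, so $v=e_3$ and $H=E$-coefficient vector of $r_\infty$ ($c_6=1$ only).
\item If $v_0=1$ and $v_3=0$, then $v_1v_2=0$ and $v_1=v_2$, $v_2=0$. So $v=e_0$, giving $r_0$.
\item If $v_0=v_3=1$, then $v_1=v_2$ and $v_1=v_1v_1$ is automatic. The Hankel entries $c_1=v_0v_1$ and $c_2=v_1=v_0v_2$ then force $v=(1,\lambda,\lambda,1)$. The Hankel condition $c_1=H_{01}=\lambda$ must also equal $H_{10}$, which is fine, and $c_3=H_{03}=1=H_{12}=\lambda^2=\lambda$, so $\lambda=1$. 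This gives the all-ones $H$, i.e. $c_s=1$ for all $s$, which is $r_1$.
\end{itemize}
A cleaner way to organize this is as rational points $(1:\alpha)$ or $\infty$ of the moment curve $v=(1,\alpha,\alpha^2,\alpha^3)$ over $\F$. I would present it that way, with the casework above serving as the verification.

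Conversely, $r_0=a_0\wedge b_0$ and $r_\infty=a_3\wedge b_3$ are decomposable. Likewise $r_1=A(1)B(1)=(\sum a_i)\wedge(\sum b_j)$ in top degree is decomposable. All three are nonzero elements of $T$.

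The main care point is the identification of ``decomposable'' with rank two of $M(t)$ rather than via $t\wedge t$, which Remark~\ref{rem:square} warns is useless in characteristic two. A second care point is the symmetric rank-one factorization $H=vv^T$, which relies on $\F$ having only the scalar $1$. Everything else is a finite check.
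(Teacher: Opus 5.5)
Your proof is correct and follows essentially the paper's route: reduce decomposability to $\rk H(t)=1$, write the symmetric rank-one Hankel matrix as $vv^T$ over $\F$, and use the anti-diagonal relations $v_1=v_0v_2$, $v_0v_3=v_1v_2$, $v_2=v_1v_3$ to force $v\in\{e_0,(1,1,1,1),e_3\}$, i.e.\ $r_0,r_1,r_\infty$. The only cosmetic difference is that you obtain decomposable $\Leftrightarrow$ $\rk H(t)=1$ from the block identity \eqref{eq:rankdouble}, whereas the paper argues directly that a decomposable target form has no $A\wedge A$ or $B\wedge B$ component and so has cross matrix of rank at most one.
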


\begin{proof}
Let a decomposable target form be
$(\alpha+\beta)\wedge(\gamma+\delta)$ with
$\alpha,\gamma\in A$ and $\beta,\delta\in B$.
The absence of $A\wedge A$ and $B\wedge B$ components forces
$\alpha,\gamma$ to be dependent and $\beta,\delta$ to be dependent.
Thus the cross matrix has rank at most one.

Conversely, a nonzero rank-one Hankel matrix is symmetric and can be
written $vv^T$.  If $k$ is the first nonzero index of $v$, the Hankel
condition forces $k=0$ or $k=3$.  If $k=3$ one gets $r_\infty$.
If $k=0$, the recurrence forced by equality along anti-diagonals gives
either $v=(1,0,0,0)$, yielding $r_0$, or
$v=(1,1,1,1)$, yielding $r_1$.
\end{proof}

The rank-one classification is classical; its role below is to
control which target directions can first enter an XOR--AND flag.

\subsection{Useful purely quadratic prefixes}

\begin{proposition}[rational-place prefix theorem]\label{prop:prefix}
Let $S\le R$ and $V=\Aff+S$.  Every useful AND extension of $V$ adds
one of the missing rational-place directions
$r_0,r_1,r_\infty$.

Consequently the all-useful target-prefix states have cardinalities
\[
 3,\quad3,\quad1,\quad0
\]
after one, two, three, and four useful gates respectively.
\end{proposition}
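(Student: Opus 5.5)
The plan is to analyse a single AND gate $g=fh$ with $f,h\in V=\Aff+S$ and to ask when $V+\langle g\rangle$ meets $W$ in a new direction modulo $\Aff$. Write $f=f_2+f_1+f_0$ and $h=h_2+h_1+h_0$, where $f_2,h_2\in S\subseteq R$ are the quadratic parts and $f_1,h_1\in L$, $f_0,h_0\in\F$. Usefulness means that some $v+g\in W\setminus(V\cap W)$ with $v\in V$ exists. Since $V\subseteq W$ already, this is equivalent to $g\in W$ modulo $V$. Concretely, $g\equiv t\pmod{\Aff+S}$ for some $t\in T\setminus S$.

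First I treat the case where $f_2=h_2=0$ modulo affine adjustments, i.e.\ both inputs are affine. Then $g$ has top component $f_1\wedge h_1$, which is decomposable. Since $g\in W$, this two-form must equal the degree-two part of some $t+s$ with $s\in S$, and $t+s\in T$ is decomposable. By Lemma~\ref{lem:rankone}, $t+s\in\{r_0,r_1,r_\infty\}$. Because $t\notin S$, $t+s$ is a rational-place direction not already in $S$. Thus the added direction modulo $V$ is one of the missing $r_\bullet$, as claimed.

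The main obstacle is the case where an input carries a quadratic part from $S$, so that Boolean contraction might bring $g$ back into degree two. I handle it by exterior top-degree bookkeeping. The elements $r_0,r_1,r_\infty$ are each $\alpha\wedge\beta$ with $\alpha\in A$ and $\beta\in B$, and the product of two such terms has top part $\alpha\wedge\beta\wedge\alpha'\wedge\beta'$. This wedge is nonzero for distinct places, because $a_0,\ \sum a_i,\ a_3$ are independent, and likewise on the $B$ side. A square $r\cdot r=r$ is idempotent and gives nothing new.

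The argument then proceeds through the degrees of $g$. I expand $g$ and require that its components of degree $4$ and $3$ vanish, since $g$ lies in $W$ up to $V$, and $V\subseteq W$ has degree at most two. The degree-four part $f_2\wedge h_2$ forces $f_2$ and $h_2$ to be proportional, so after subtracting I may take $h_2=0$ or $f_2=h_2=r$. In the first subcase the cubic part $f_2\wedge h_1$ plus the contraction terms must vanish. Writing $f_2=\alpha\wedge\beta$, vanishing of the cubic top forces $h_1\in\Span\{\alpha,\beta\}$. Boolean multiplication then gives $r\cdot\ell\in\Aff+\Span\{r\}$ when $\ell$ is $\alpha$, $\beta$, or $\alpha+\beta$. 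I would verify this by direct ANF expansion, using $\alpha^2=\alpha$ when $\alpha$ is a single variable or a sum of variables. Caution is needed here: for $\alpha=a_0+a_1+a_2+a_3$ one has $\alpha^2=\alpha$ in the Boolean algebra as well, so the claim holds. If the reduction turns out not to close, I would fall back on Lemma~\ref{lem:support} applied to the rank of the quadratic remainder, or simply enumerate the small cases. In the second subcase, $f_2=h_2=r$, the product $(r+\ell)(r+\ell')$ reduces to the first subcase by idempotence. So no new target direction arises, $g$ is non-useful, and every useful extension has purely affine inputs.

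For the counts I combine this with Lemma~\ref{lem:one}. Starting from $S=0$, the first useful gate adds one of three places, giving three states. After two useful gates, $S$ is one of the three $2$-subsets, giving three states. After three gates $S=R$, a single state. At that point no missing rational direction remains, so there is no useful extension and the count of four-gate prefixes is zero.
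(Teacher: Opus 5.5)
\paragraph{Verdict.} Your route is the paper's, but it has a missing case and one unjustified conclusion. The shared route is this: the quartic part forces the two quadratic parts to be proportional, $u(u+v)=u+uv$ and idempotence reduce to a single quadratic factor $G$, the cubic part confines the linear cofactor, and Lemma~\ref{lem:rankone} finishes on a product of affine forms.

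\paragraph{Missing case: non-decomposable $G$.} The common factor $G$ is an arbitrary nonzero element of $S$. Once $\dim S\ge 2$ it can be $r_0+r_1$ or $r_0+r_1+r_\infty$, which are not decomposable (alternating rank $4$ or $6$). Your step ``Writing $f_2=\alpha\wedge\beta$'' silently assumes $G$ is a single rational place, so this branch is never treated. It is not vacuous: Lemma~\ref{lem:normalized} applies the proposition with $S=R$. The paper handles it this way: if $G$ has rank at least four, then $N\mapsto N\wedge G$ is injective on $L$. The cubic condition therefore forces $h_1=0$ (respectively $\ell=\ell'$ in your $f_2=h_2$ subcase), and the product already lies in $\Aff+S$.

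A related small point: getting proportionality from $f_2\wedge h_2=0$ needs the three four-forms $r_\alpha\wedge r_\beta$ to be linearly independent, not merely nonzero. For instance, $(r_0+r_1)\wedge(r_0+r_\infty)$ is the sum of all three. The independence of $a_0,A(1),a_3$ and of $b_0,B(1),b_3$ that you cite does give this, since in an adapted basis the three wedges are distinct monomials, but you should say so.

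\paragraph{Unjustified non-usefulness claim.} In the single-place subcase you show only that $f_2h_1\in\Span\{r_\theta\}$, and then declare $g$ non-useful. That ignores the remaining term $(f_1+f_0)(h_1+h_0)$, whose quadratic part $f_1\wedge h_1$ could a priori be a new target direction. What you have actually proved is $g\equiv(f_1+f_0)(h_1+h_0)\pmod{\Aff+S}$. The right conclusion, as in the paper, is that usefulness reduces to your affine case, which already yields a missing rational place.

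Non-usefulness does in fact hold here. Since $h_1\in P_\theta$ and $P_0,P_1,P_\infty$ pairwise meet only in zero, any rational-place value of $f_1\wedge h_1$ must be $r_\theta\in S$. But that has to be argued, not asserted.

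With these two repairs, your argument and the subset count go through as written.
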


\begin{proof}
Write two available wires as
\[
 u=\alpha+\ell+Q,\qquad
 v=\beta+m+C,
\]
with $\ell,m\in L$ and $Q,C\in S$.
If the product is useful without creating a new high-degree direction,
its quartic top satisfies
\[
 Q\wedge C=0.
\]
In an adapted basis for the three rational places,
the three four-forms
$r_0\wedge r_1$, $r_0\wedge r_\infty$,
$r_1\wedge r_\infty$ are independent.  Hence $Q,C$ are dependent:
\[
 Q=\varepsilon G,\qquad C=\delta G
\]
for some $G\in S$.

Put $N=\delta\ell+\varepsilon m$.  The cubic top is $N\wedge G$.
If $G$ contains at least two rational-place components, its alternating
rank is at least four, so
$N\mapsto N\wedge G$ is injective and vanishing of the cubic part
forces $N=0$.  If $G=r_\theta$ is a single rational place, vanishing
forces $N\in P_\theta=\Span\{A_\theta,B_\theta\}$; multiplying a vector
of $P_\theta$ by $A_\theta B_\theta$ produces only a multiple of
$r_\theta$ in degree two.  Finally $G^2=G$ in the Boolean algebra.
Thus all nonlinear contributions reduce into $S$.

Modulo $S$, any new target direction therefore comes from a product of
two affine forms.  By Lemma~\ref{lem:rankone}, it is one of
$r_0,r_1,r_\infty$.  The state counts are then the counts of
one-, two-, and three-element subsets of a three-element set.
\end{proof}

\section{Rank-two target forms and the degree-two place}\label{sec:ranktwo}

We need one further finite piece of Hankel geometry.  It is small
enough to display completely.

\begin{lemma}[binary rank-two Hankels]\label{lem:ranktwo}
The coefficient words $c_0c_1\cdots c_6$ for which
$\rk H(c)\le2$ are exactly
\[
\begin{array}{c|l}
c_0c_1 & c_0c_1\cdots c_6\\ \hline
00&0000000,\ 0000010,\ 0000001,\ 0000011\\
01&0100000,\ 0101010,\ 0110110,\ 0111111\\
10&1000000,\ 1000001,\ 1010101,\ 1011011\\
11&1100000,\ 1101101,\ 1111110,\ 1111111.
\end{array}
\]
\end{lemma}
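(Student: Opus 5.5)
The statement is a finite classification of the $2^7$ binary words $c=c_0\cdots c_6$, so the plan is to organize the search by the leading pair $c_0c_1$, exactly as the table does. Within each block I would pin down the remaining coefficients by linear-recurrence reasoning rather than enumerate blindly.

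\textbf{Case $H_{3\times3}$ nonsingular.} Let $H_3(c)$ denote the leading $3\times3$ Hankel block, with entries $c_0,\dots,c_4$. If $H_3(c)$ were nonsingular, then $\rk H(c)\ge3$, so every listed word must have $\rk H_3(c)\le 2$. The converse direction uses the standard Hankel structure theory, equivalently the Berlekamp--Massey or linear-complexity description. A finite sequence has Hankel rank $\le 2$ essentially when it satisfies a linear recurrence of order $\le2$, up to a defect at the tail. Binary Hankel matrices of rank $\le r$ are exactly those whose sequence agrees with a linear recurring sequence of order $\le r$ on an initial segment, possibly followed by a ``tail'' that is zero except at the end.

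\textbf{Counting the words.} I would then list the candidate recurrences over $\F$ of order at most two. The characteristic polynomials are $1$, $z$, $z+1$, $z^2$, $z^2+1$, $z^2+z$, $z^2+z+1$. For each leading pair $c_0c_1$, each polynomial determines the continuation $c_2,\dots,c_6$.
\begin{itemize}
\item In block $01$, one gets $0100000$ (order $z^2$), $0101010$ (order $z^2+1$), $0110110$ (order $z^2+z+1$), and $0111111$ (order $z^2+z$, i.e.\ $1/(1-z)$ shifted).
\item The remaining words arise as sums of at most two rank-one Hankel sequences of the three rational places: $1000000$ from $r_0$, $1111111$ from $r_1$, $0000001$ from $r_\infty$, together with the degree-two place $z^2+z+1$ (period three, e.g.\ $0110110$, $1011011$, $1101101$).
\item Degenerate ones at infinity, such as $0000010$ and $0000011$, are tails supported near $c_6$ (the ``point at infinity with multiplicity two'').
\end{itemize}
This matches the expected count: sixteen words, four per leading pair, consistent with the Hankel-rank count over $\F$.

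\textbf{Verification and the main obstacle.} For sufficiency I would check each listed word directly. Each $H(c)$ has all $3\times3$ minors zero, or equivalently has two rows spanning the rest; this is routine. The main obstacle is necessity: showing that no other word has $\rk H\le2$. I would handle it by fixing $c_0c_1$ and splitting on $c_2$.
\begin{itemize}
\item If the leading $2\times2$ block $\begin{pmatrix}c_0&c_1\\c_1&c_2\end{pmatrix}$ is nonsingular, then rank $\le2$ forces the first two rows to span all four. This determines $c_3,c_4,c_5,c_6$ uniquely from the recurrence $c_{k+2}=\lambda c_{k+1}+\mu c_k$, whose coefficients are fixed by the column relation. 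That gives one word per choice of $(c_0,c_1,c_2)$ with nonsingular block.
\item If the leading block is singular, the first two rows are dependent, and one propagates along anti-diagonals to force the word to be a rank-one sequence plus a tail at $\infty$.
\end{itemize}
Tallying both branches over all $c_0c_1c_2$ should reproduce exactly the sixteen entries. The bookkeeping of the singular branch, where the tail at infinity must be admitted, is the step most likely to hide a missed word, so I would double-check it by computing the $3\times3$ minors in that branch.
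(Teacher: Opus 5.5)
\textbf{Verdict: genuine gap.} Your necessity strategy, fixing the leading entries and doing row-span elimination, is essentially the paper's case split by $(c_0,c_1)$. However, the one concrete claim you make in it is false.

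\textbf{The nonsingular branch is undercounted.} This branch is $c_0c_2+c_1=1$, i.e.\ $c_0c_1c_2\in\{010,011,101,110\}$. Here the column relation that fixes the recurrence reads $c_2=\mu c_0+\lambda c_1$, $c_3=\mu c_1+\lambda c_2$. It involves $c_3$, so the triple $c_0c_1c_2$ imposes only one linear condition on $(\lambda,\mu)$ and leaves two solutions. Hence $c_3$ is free and only $c_4,c_5,c_6$ are determined. Each admissible triple therefore gives two words, eight in all; for example, $010$ gives both $0100000$ and $0101010$, as your own block-$01$ list shows. With ``one word per triple'' this branch contributes four. Since the singular branch contributes eight, your tally would stop at twelve rather than sixteen.

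\textbf{The singular branch is not carried out, and its description is too loose.} You correctly flag this as the delicate part. But your description of it (``rank-one sequence plus a tail at $\infty$''), like your paraphrase of the structure theory (agreement with an order-$\le r$ recurrence followed by an end tail), is over-inclusive. The word $1000010=r_0+E_5$ fits it, yet its Hankel matrix has rank three. The missing constraint is, in effect, that recurrence order plus tail length is at most two, and the elimination must produce it explicitly:
\begin{itemize}
\item For $c_0c_1c_2=000$, pivots force $c_3=c_4=0$ with $c_5,c_6$ free.
\item For $001$, the leading $3\times3$ block is anti-triangular with unit antidiagonal, hence nonsingular, so there are no words.
\item For $100$ one is forced to $c_3=c_4=c_5=0$, and for $111$ to $c_3=c_4=c_5=1$; in both cases only $c_6$ is free.
\end{itemize}

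\textbf{A shorter route you left unused.} You could have avoided necessity altogether. After checking that the sixteen listed words have rank at most two, use the classical count (Daykin) of $n\times n$ Hankel matrices of rank $r$, $1\le r<n$, over $\mathbf F_q$, namely $q^{2r-2}(q^2-1)$. It gives exactly $1+3+12=16$ binary $4\times4$ Hankel matrices of rank at most two, which finishes the proof. As written, you use that count only as a consistency check.
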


\begin{proof}
The four cases are obtained by row-span elimination, not by a search.
For example, if $c_0=c_1=0$, successive pivots force
$c_2=c_3=c_4=0$, leaving $c_5,c_6$ free.
If $(c_0,c_1)=(0,1)$, the first two rows are independent and rows
three and four lying in their span force
\[
 c_4=c_2,\qquad
 c_5=c_2+c_3+c_2c_3,\qquad
 c_6=c_2c_3.
\]
The four choices of $(c_2,c_3)$ give the second row of the table.
For $(1,0)$, either the second row vanishes, yielding
$1000000,1000001$, or independence gives
\[
 c_2=1,\quad c_4=1+c_3,\quad c_5=c_3,\quad c_6=1.
\]
For $(1,1)$, either the first two rows coincide, yielding
$1111110,1111111$, or independence gives
\[
 c_2=0,\quad c_4=c_3,\quad c_5=0,\quad c_6=c_3.
\]
\end{proof}

Among the fifteen nonzero forms, six lie in $R$, six are the two
first-order tangent forms at each rational place,
\[
\begin{array}{c|c}
0&E_1,\ E_0+E_1\\
1&E_1+E_3+E_5,\ E_0+E_2+E_4+E_6\\
\infty&E_5,\ E_5+E_6,
\end{array}
\]
and the remaining three are
\[
 0110110,\qquad1101101,\qquad1011011.
\]
They span a two-dimensional plane $D_*$ satisfying
\[
 c_{s+2}=c_{s+1}+c_s,
\]
the recurrence with minimal polynomial $x^2+x+1$.
Thus $D_*$ is the degree-two closed place over $\F$.

Let $\alpha\in\mathbf F_4$ satisfy $\alpha^2+\alpha+1=0$.  Write
\[
 A(\alpha)=u_A+v_A\alpha,\qquad
 u_A=a_0+a_2+a_3,\quad v_A=a_1+a_2,
\]
and similarly $u_B,v_B$.  Put
\[
 K_{\mathrm{deg}\,2}
 =\Span\{u_A,v_A,u_B,v_B\}.
\]
Then
\[
 T\cap\Lambda^2K_{\mathrm{deg}\,2}=D_*.
\]
In the bases $(u_A,v_A)$ and $(u_B,v_B)$, a nonzero member of $D_*$
has cross matrix
\[
 \begin{pmatrix}x&y\\y&x+y\end{pmatrix},
\qquad (x,y)\ne(0,0),
\]
whose determinant is the $\mathbf F_4/\F$ norm
$x^2+xy+y^2=1$.  Hence every nonzero member of $D_*$ has alternating
rank four and $D_*$ misses the Grassmannian.

For later support arguments let
\[
 K_0=\Span\{a_0,a_1,b_0,b_1\},\qquad
 K_\infty=\Span\{a_3,a_2,b_3,b_2\},
\]
and
\[
 K_1=\Span\{A(1),A^{(1)}(1),B(1),B^{(1)}(1)\},
\]
where $A^{(1)}(1)=a_1+a_3$ and similarly for $B$.
The six tangent forms have support $K_0,K_1,K_\infty$ respectively,
while the degree-two forms have support $K_{\mathrm{deg}\,2}$.

\section{Flag normalization}\label{sec:normalization}

The next lemma is the combinatorial step on which the later exterior
argument rests.

\begin{lemma}[first-entry replacement]\label{lem:firstentry}
Suppose $t\in T$ first enters the flag at gate $j$:
$t\in V_j$ but $t\notin V_{j-1}$.  Then gate $j$ may be replaced by
any direct one-AND realization of $t$ without changing the subspace
$V_j$ or any later realizable subspace.

In particular, if $t=r_\theta$, gate $j$ may be replaced by
$A_\theta B_\theta$.
\end{lemma}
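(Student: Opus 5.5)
The plan is to treat the flag $V_0=\Aff\subset V_1\subset\cdots$ as a chain of $\F$-subspaces of $\Bool$ with $\dim(V_j/V_{j-1})\le1$, and to use only two facts: the one-step linear algebra of the proof of Lemma~\ref{lem:one}, and the observation that realizability of a later gate depends only on the subspace $V_{j}$ already present, not on how it was produced.

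First I determine $V_j$ from the entry hypothesis. Since $t\in V_j\setminus V_{j-1}$ and $V_j=V_{j-1}+\langle g_j\rangle$ has codimension at most one over $V_{j-1}$, we get $V_j=V_{j-1}+\langle t\rangle$. Equivalently, $t=g_j+v$ for some $v\in V_{j-1}$ with $g_j$ having nonzero coefficient. So $V_j$ is determined by $V_{j-1}$ and $t$ alone.

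Next I replace gate $j$ by a direct one-AND realization of $t$. This means a product $g'=f_1f_2$ with $f_1,f_2\in\Aff\subseteq V_{j-1}$ (more generally $f_1,f_2\in V_{j-1}$) such that $g'\equiv t\pmod{V_{j-1}}$; in particular $g'=t$ is allowed. The operands lie in the current wire span, so this is a legal AND gate. The new space is $V_{j-1}+\langle g'\rangle=V_{j-1}+\langle t\rangle=V_j$. For $t=r_\theta$, I take $A_\theta B_\theta$, namely $a_0b_0$, $A(1)B(1)$ or $a_3b_3$, whose Boolean product is exactly $r_\theta$ with both factors in $L\subseteq\Aff$. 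This gives the ``in particular'' clause, provided $A_\theta,B_\theta$ are read as these linear forms.

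Then I argue by induction for $k>j$ that the original gates $k$ remain realizable. Each original gate $k$ multiplies two elements of $V_{k-1}$. Since $V_j$ is unchanged, and each later $V_k$ is defined as $V_{k-1}+\langle g_k\rangle$ with the same $g_k$, every operand still lies in the same span. Hence all later subspaces coincide, and so does any subspace realizable from $V_j$. The AND count and the output span are also preserved.

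I do not expect a real obstacle here. The only delicate point is to state precisely that the replacement must be congruent to $t$ modulo $V_{j-1}$ with operands in $V_{j-1}$, and to check that the literal operands of later gates remain available. They do, because circuits are specified by wire spans and XOR is free.
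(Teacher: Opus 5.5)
Your proposal is correct and follows the paper's proof. Both arguments note that $t=g_j+v$ with $v\in V_{j-1}$, so $\langle V_{j-1},g_j\rangle=\langle V_{j-1},t\rangle=\langle V_{j-1},g'\rangle$ for any replacement $g'\equiv t\pmod{V_{j-1}}$. Both then use that later gates depend only on the wire span, which XOR rewiring preserves at no AND cost.
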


\begin{proof}
There is $v\in V_{j-1}$ with
\[
 g_j+v=t.
\]
Hence
\[
 \langle V_{j-1},g_j\rangle
 =\langle V_{j-1},t\rangle.
\]
All later gate inputs are free XOR combinations of the current basis
wires, so they may be rewritten in the new basis with no additional
AND gates.
\end{proof}

\begin{lemma}[normalized eight-gate form]\label{lem:normalized}
If an eight-AND circuit computes $\Mul_4$, there is an eight-AND
circuit with the same property whose first three gates are
$r_0,r_1,r_\infty$, whose fourth gate is the unique non-useful gate,
and whose gates five through eight are all useful.
\end{lemma}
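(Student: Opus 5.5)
We begin with an arbitrary eight-AND circuit for $\Mul_4$, with wire spaces $\Aff=V_0\subset V_1\subset\cdots\subset V_8$. The final space satisfies $d_T(V_8)=7$, and by Lemma~\ref{lem:one} each gate raises $d_T$ by at most one. So exactly seven gates are useful and exactly one, say gate $k$, is non-useful. If $d_T$ rose at a gate $j$ without $V_j/V_{j-1}$ being one-dimensional, the injection in the proof of Lemma~\ref{lem:one} would fail. We may therefore assume every gate enlarges the space, discarding redundant gates, which would only shorten the circuit and contradict the nine-gate quadratic lower bound after padding.

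\textbf{Step 1: the useful prefix before the defect lies in $W$ and consists of rational places.} Every gate $j<k$ is useful. Exactly as in the proof of Proposition~\ref{prop:n3}, usefulness plus $\dim V_j/V_{j-1}=1$ forces $V_j\subseteq W$ inductively: if $g_j\notin W$, then $V_j\cap W=V_{j-1}\cap W$. So $V_j=\Aff+S_j$ with $S_j\le T$ of dimension $j$.

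Next apply Proposition~\ref{prop:prefix} inductively. Starting from $S_0=0\le R$, each useful extension adds a missing rational place, so $S_j\le R$. Since $R$ is three-dimensional, the prefix has length at most three, and $S_j\subseteq R$ for all $j<k$. This gives $k\le 4$.

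\textbf{Step 2: force $k=4$ by showing that a fourth useful purely quadratic gate is impossible.} Once $S_3=R$, a useful fourth gate would again lie in $W$. By Proposition~\ref{prop:prefix} it would have to add a missing rational place, but none is missing. Hence gate $4$ cannot be useful, since its new target direction would have to be rational. If $k<4$, we still need to place the remaining rational places among gates $1,\dots,3$ before the defect. This is the delicate point, and I would handle it by reordering:

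\begin{itemize}
\item Every $r_\theta$ lies in $T\subseteq V_8\cap W$. Take the first index at which each $r_\theta$ enters the flag.
\item By Lemma~\ref{lem:firstentry}, the gate at that first entry may be replaced by $A_\theta B_\theta$, which uses only inputs from $\Aff$.
\item That gate can therefore be moved to the very front. Doing so preserves every later wire space up to the addition of $r_\theta$ earlier. The previous basis wires are unchanged, so earlier gates remain computable.
\end{itemize}

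After moving all three rational places to the front, the flag becomes $\Aff\subset\Aff+\langle r_\cdot\rangle\subset\cdots\subset\Aff+R\subset\cdots$.

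It remains to check the usefulness bookkeeping. The moved gates are useful. Because $d_T(V_8)=7$ is fixed and the total number of gates is eight, exactly one gate among the remaining five is non-useful. By Step 2 applied at $V_3=\Aff+R$, gate $4$ cannot be useful.

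If some later gate were the non-useful one instead, gate $4$ would be useful, which is a contradiction. So the unique non-useful gate is gate $4$, and gates $5$ through $8$ are useful.

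\textbf{Main obstacle: the reordering step.} We must show that moving first-entry gates forward keeps each intermediate span well defined. The point is that each old gate's inputs were XOR combinations of earlier wires, which remain available, so the old product is still computable. We must also avoid double counting when some $r_\theta$ enters only as a non-first-entry combination; this is excluded by taking its first entry. The count then follows purely from dimensions: after the swap the chain of spans is nested with one-dimensional steps and ends at $V_8$. I would first verify that the span sequence after the moves is $\Aff+R$ followed by $V_j+R$ for the remaining indices, with duplicates removed.
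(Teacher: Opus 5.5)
Your argument is correct and is essentially the paper's: you replace each first-entry gate of $r_0,r_1,r_\infty$ by its direct product via Lemma~\ref{lem:firstentry} and commute these to the front, then apply Proposition~\ref{prop:prefix} at $\Aff+R$ to see that gate four is not useful, and let Lemma~\ref{lem:one} with $\dim T=7$ force gates five through eight to be useful. Your Step~1 (showing $k\le4$) and the preliminary discarding of redundant gates, whose justification is only sketched, are not needed: the reordering handles every position of the defect uniformly, and a redundant gate is simply the non-useful one.
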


\begin{proof}
All three rational-place directions belong to the final target space,
so each first enters the flag somewhere.  Apply
Lemma~\ref{lem:firstentry} to replace those first-entry gates by their
direct products.  Since each such product depends only on the original
inputs, commute the gate to the front.  Moving it earlier does not
invalidate any old gate: gates that formerly preceded it did not use
it, and later gates may be XOR-rewired as above.  Repeat for all three
places.

The first three gates are independent target directions and therefore
useful.  The next non-place gate is now a legal extension of
$\Aff+R$.  Proposition~\ref{prop:prefix} says it cannot be useful, so
it is the unique non-useful gate.  Lemma~\ref{lem:one} and
$\dim T=7$ then force every remaining gate to be useful.
\end{proof}

Call gate four the \emph{seed} and write its output as $g$.

\begin{lemma}[the seed has nonzero high part]\label{lem:highnonzero}
The seed of a normalized hypothetical eight-AND circuit has
$\pihi(g)\ne0$.
\end{lemma}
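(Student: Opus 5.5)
We argue by contradiction, assuming $\pihi(g)=0$. Here $\pihi$ projects onto the homogeneous components of degree at least three, so the assumption says the seed $g$ has algebraic degree at most two. Since the first three gates are $r_0,r_1,r_\infty$ and the seed is non-useful, the wire space after four gates is
\[
 V_4=\Aff+R+\langle g\rangle ,
\]
with $g\notin W$, because a non-useful gate adds no target dimension. Thus $g$ is a quadratic Boolean function whose degree-two part $\omega\in\Lambda^2L$ satisfies $\omega\notin T$.

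The plan is to show that, starting from a degree-two seed, the four remaining useful gates can be pushed back into the purely quadratic world. That would give a quadratic circuit of eight ANDs for $\Mul_4$, contradicting the classical bound of nine for quadratic circuits~\cite[Sec.~5, Table~3]{FindPeralta19}, which is the same citation used in Proposition~\ref{prop:n3}.

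The key steps, in order, are these.

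\textbf{Step 1.} We show inductively that every useful gate after the seed may be taken quadratic. Suppose $V_j$ is spanned by functions of degree at most two, and gate $j+1$ is useful. By usefulness, some $t\in T\setminus V_j$ satisfies $t\in V_{j+1}$. Hence $g_{j+1}\equiv t \pmod{V_j}$, so $g_{j+1}$ is itself congruent modulo $V_j$ to a degree-two function. By the replacement argument of Lemma~\ref{lem:firstentry}, the subspace $V_{j+1}=V_j+\langle t\rangle$ can be realized by one new AND gate. The catch is that this gate must compute $t$ (or some $t+v$ with $v\in V_j$) with quadratic inputs, and the original gate may have multiplied nonlinear wires. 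So the real content of this step is to show that $t+v$ has a direct one-AND quadratic realization, which requires an alternating rank-two degree-two part.

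\textbf{Step 2.} We establish that rank-two condition via the exterior bookkeeping. Write the two inputs of the useful gate as $u=\alpha+\ell+Q$ and $w=\beta+m+C$, with $Q,C$ in the degree-two span of $V_j$. The output lies in $V_j+W$ and so has degree at most two. We then argue as in Proposition~\ref{prop:prefix}:
\begin{itemize}[nosep]
 \item The quartic top $Q\wedge C$ must vanish. Unlike in Proposition~\ref{prop:prefix}, this does not force $Q$ and $C$ to be dependent, so instead we reduce modulo $V_j$.
 \item Since $V_j$ is closed under XOR, we may replace $u$ and $w$ by combinations that kill $Q$ and $C$ whenever possible.
 \item The cubic part $N\wedge G$ must likewise vanish.
\end{itemize}
Combined with Lemma~\ref{lem:support} and idempotence $G^2=G$, this shows the product equals a product of affine forms modulo $V_j$. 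Such a product is computable by a bilinear-type quadratic gate, which closes the induction.

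\textbf{Step 3.} After all four useful gates are made quadratic, every wire has degree at most two. By Boyar--Find's equivalence~\cite[Sec.~2]{BoyarFind18} the circuit is then quadratic with eight ANDs, which is the desired contradiction.

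The main obstacle is Step 2. With a degree-two seed available, two-forms such as $\omega$ (whose rank is unknown) can combine with rational-place forms, and the condition $Q\wedge C=0$ is weaker than dependence when, for example, $Q$ and $C$ share a common linear factor. I would first try to avoid a case analysis by a cleaner route: observe that the whole circuit's intermediate spaces stay inside the degree-$\le2$ subalgebra. The point is that if both inputs of a gate have degree at most two, so that $u=\alpha+\ell+Q$ and $w=\beta+m+C$ with $\deg Q,\deg C\le 2$, then the top components $Q\wedge C$ and $N\wedge G$ must cancel for the output to lie in $V_j+W$. Any such cancellation leaves a degree-$\le2$ function computed by an AND of degree-$\le2$ functions, and Boyar--Find count that as quadratic only if the inputs are affine. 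So the fallback is to exhibit directly that such a product equals, modulo $V_j$, an AND of two affine forms. This uses the rank-one classification in Lemma~\ref{lem:rankone} together with the support constraint of Lemma~\ref{lem:support}.
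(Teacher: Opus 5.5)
Your Step~1 already contains the paper's whole argument; you just do not draw the conclusion. If gate $j+1$ is useful, then $g_{j+1}=t+v$ with $t\in T$ and $v\in V_j$. When $V_j$ consists of functions of degree at most two, this says $g_{j+1}$ \emph{itself} has degree at most two. The first three gates are $r_0,r_1,r_\infty$, the seed is assumed to have degree at most two, and gates five through eight are useful. So by induction every gate computes a function of degree at most two. Boyar--Find's equivalence then makes the circuit quadratic, and Find--Peralta's bound of nine gives the contradiction, exactly as in your Step~3. The paper states the same step contrapositively: the first gate with $\pihi\neq0$ would leave $V\cap W$ unchanged, making it a second non-useful gate.

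You should drop Step~2 and the ``catch'' that motivates it. The equivalence you invoke in Step~3, which the paper quotes in its definition, says a circuit is quadratic as soon as every gate computes a function of degree at most two. No gate needs affine inputs, so there is nothing to re-realize. Your closing claim that Boyar--Find count a gate as quadratic only if its inputs are affine contradicts that equivalence. If it were true, Step~3 would not follow from ``every wire has degree at most two'', and Step~2 would become a genuine gap.

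As sketched, Step~2 does not prove that a degree-$\le2$ product of degree-$\le2$ wires is, modulo $V_j$, a product of two affine forms. Two things are missing:
\begin{itemize}
\item The wedge bookkeeping $Q\wedge C=0$ does not control Boolean contractions. For example, $a_0a_1\cdot a_0a_2=a_0a_1a_2$ has vanishing wedge but a cubic product.
\item The seed's two-form $\omega$, of unknown rank, lies in $V_j$ and is never handled.
\end{itemize}
With the equivalence accepted, none of this is needed, and your proof is complete once Step~1 records the degree bound.
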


\begin{proof}
If the seed had degree at most two, suppose a later gate were the first
gate with nonzero degree-$\ge3$ part.  Before that gate the whole state
would have degree at most two.  Any combination involving the new
high component could not lie in $\Aff+T$, so that first high-degree
gate would be non-useful.  This would be a second non-useful gate,
impossible.

Hence every gate in the eight-gate circuit would compute a Boolean
function of degree at most two, so Boyar--Find's equivalence makes the
whole circuit quadratic~\cite[Sec.~2]{BoyarFind18}.  Find--Peralta
state in their lower-bound section that any quadratic Boolean circuit
for $n$-term polynomial multiplication induces the relevant linear
code, and their table gives the lower bound nine for $n=4$
\cite[Sec.~5, Table~3]{FindPeralta19}.  This is the Boolean-quadratic
formulation of the classical Kaminski lower bound~\cite{Kaminski85},
and rules out eight gates.
\end{proof}

\section{A completion-capable seed is cubic}\label{sec:quartic}

The seed is a product of two wires from $\Aff+R$.  Such a product can
have degree four.  We now rule out that possibility algebraically.

Write its two factors as
\[
 u=\alpha+\ell+Q,\qquad
 v=\beta+m+C,
\qquad Q,C\in R.
\]
The quartic part is $Q\wedge C$.

\begin{proposition}[quartic exclusion]\label{prop:quartic}
If the normalized seed admits even one useful suffix gate, then
\[
 \pi_4(g)=0.
\]
Consequently the nonzero high part of the seed is purely cubic.
\end{proposition}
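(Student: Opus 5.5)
We argue by contradiction: assume $\pi_4(g)=Q\wedge C\ne0$ and that some gate five, with output $h$, is useful. In particular $Q,C\in R$ are independent. The three four-forms $r_\theta\wedge r_{\theta'}$ are independent (as in Proposition~\ref{prop:prefix}), so $\pi_4(g)$ is a nonzero combination of them. Gate five multiplies two wires $x=\xi+\lambda g$ and $y=\eta+\mu g$ with $\xi,\eta\in\Aff+R$. Usefulness means that some $w\in V_4$ satisfies $h+w\in W\setminus(\Aff+R)$, i.e.\ $h+w$ has no component of degree $\ge3$. The plan is to track the top-degree components in the exterior algebra and show that the quartic defect cannot be cancelled.

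\textbf{Case $\lambda=\mu=0$.} Then gate five is a legal extension of $\Aff+R$, apart from the $g$-term available in $w$. The only new high-degree material in $h$ is the quartic $Q'\wedge C'$ with $Q',C'\in R$, and possibly cubic terms. Cancelling against $g$ requires $Q'\wedge C'\in\{0,\pi_4(g)\}$. In either case the pure part reduces exactly as in Proposition~\ref{prop:prefix}: the same injectivity argument for $N\mapsto N\wedge G$ applies, now applied modulo the span of $g$. This shows that the target direction gained modulo $R$ comes from a rank-one form. By Lemma~\ref{lem:rankone}, that form lies in $R$, which is not useful.

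\textbf{Case $\lambda$ or $\mu$ nonzero.} Then $h$ contains $g\cdot(\text{wire})$, whose top degree is the wedge of $Q\wedge C$ with the quadratic or linear top of the other factor. Because $Q\wedge C$ has rank-four factors, a $\Lambda^6$ or $\Lambda^5$ component appears. We must show that it cannot cancel, since nothing else in $V_4$ has degree above four. The key identity is $g^2=g$ in $\Bool$. If both factors involve $g$, expanding $(\xi+g)(\eta+g)$ gives $\xi\eta+(\xi+\eta)g+g$. So the problem reduces to the single-$g$ case with the factor $\xi+\eta+1$.

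For the product $g\,\zeta$ with $\zeta\in\Aff+R$, we work top-down:
\begin{itemize}[leftmargin=*]
\item The degree-six part $Q\wedge C\wedge G_\zeta$ must vanish. Since the $r_\theta$ have disjoint-enough supports $K_\theta$, this forces $G_\zeta\in\Span\{Q,C\}$. Boolean idempotence then collapses $Q\cdot Q$ to $Q$.
\item The degree-five part $Q\wedge C\wedge\ell_\zeta$ must vanish. This forces $\ell_\zeta\in\Supp(Q)\cap\Supp(C)$, and Lemma~\ref{lem:support} controls these supports.
\item After these reductions, $g\zeta$ is congruent modulo $V_4$ to a multiple of $g$ plus terms already in $\Aff+R$. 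Hence $h\in V_4$ modulo $\Aff+R$, which contradicts usefulness.
\end{itemize}

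The main obstacle is the Boolean contraction bookkeeping. Products such as $r_\theta\cdot r_\theta\,r_{\theta'}$ or $A_\theta\cdot r_\theta r_{\theta'}$ produce lower-degree terms through $x_i^2=x_i$. These terms could in principle land in $T\setminus R$, for instance as tangent or $D_*$ forms from Section~\ref{sec:ranktwo}. I would handle this by computing explicitly, for each pair $\{\theta,\theta'\}$, the contraction of $r_\theta r_{\theta'}$ by a linear form in $K_\theta\cap K_{\theta'}$. The aim is to show that every such contraction remains in $\Span\{g\}+\Aff+R$, using $A_\theta r_\theta=r_\theta$ and the analogous absorption identities. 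This is a finite check over three place-pairs.

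Once $\pi_4(g)=0$ is established, Lemma~\ref{lem:highnonzero} gives $\pihi(g)\ne0$, so the high part of $g$ is purely cubic.
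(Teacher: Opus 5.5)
Your proposal has a genuine gap in each of its two cases.

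\emph{Case $\lambda=\mu=0$ with $Q'\wedge C'=\pi_4(g)\neq0$.} The argument of Proposition~\ref{prop:prefix} does not carry over here. In that proposition, $Q\wedge C=0$ produced a common $G$, and injectivity of $N\mapsto N\wedge G$ did the work. Here the two factors of gate five have \emph{independent} quadratic parts spanning the same plane $\mathcal P=\langle Q,C\rangle$, and the quartic is removed only by adding $g$. Align the factor bases first; a $\mathrm{GL}_2(\F)$ change costs only available wires, since $u(u+v)=uv+u$. The cubic condition on $h+g$ then becomes the two-term equation $x\wedge C+y\wedge Q=0$, where $x,y$ are the differences of the linear parts. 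For $\mathcal P=\langle r_0,r_1\rangle$ this equation has solutions with $0\ne x\in P_1$ and $0\ne y\in P_0$. The quadratic difference modulo $R$ then lies in $U\wedge L$ with $U=\langle x,y\rangle$ two-dimensional, and such forms can have alternating rank four. So your claim that the new direction ``comes from a rank-one form'' is unjustified, and Lemma~\ref{lem:rankone} does not finish the case. The missing step is the rank-two analysis. By Lemma~\ref{lem:ranktwo}, an outside-$R$ target of Hankel rank two has support $K_0$, $K_1$, $K_\infty$ or $K_{\mathrm{deg}\,2}$. By Lemma~\ref{lem:support}, that support would contain $U$. But none of these four spaces contains nonzero vectors of both $P_0$ and $P_1$.

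\emph{Case where gate five uses $g$.} Here there are three problems.
\begin{enumerate}
\item Your reduction keeps only $g\zeta$, but the gate is $(g+\xi)\zeta=g\zeta+\xi\zeta$. The low--low part $\xi\zeta$ carries quartic, cubic and target material, which must be cancelled jointly with $g\zeta$ in degrees $\le4$.
\item Your degree-five bullet is correct only when $G_\zeta=0$. If, say, $G_\zeta=Q$, then $Q^2=Q$ makes the degree-five part $(\ell+\ell_\zeta)\wedge Q\wedge C$, where $\ell$ is the linear part of the factor of $g$ carrying $Q$. In addition, ``$\Supp(Q)\cap\Supp(C)$'' is not even determined by $\mathcal P$.
\item Most importantly, you assert that $g\zeta$ collapses to a multiple of $g$ modulo $\Aff+R$, but you do not prove it, and top-degree bookkeeping does not deliver it. In the paper, the hypothetical useful output is written as $F=Uc$ with $U=g+a$. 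The idempotence identities $UF=F$ and $Fc=F$ then give the degree constraints. These only narrow things down, up to $S_3$, to
\[
\pi_2(c)=r_0,\qquad F_2=E_1+\varepsilon E_0,\qquad c=\delta+\rho a_0+\sigma b_0+a_0b_0 .
\]
This surviving rational-tangent configuration is then excluded by a genuinely Boolean slice argument. First, $c=1$ on exactly three $(a_0,b_0)$-slices. On those slices, the cubic and quadratic parts of $U_s=F_s$ either force the complementary linear part of $U_s$ to be constant, or confine its slice-to-slice variation to $\Span\{\bar A,\bar B\}$. By contrast, $F_s$ must vary by the independent differences $a_1$ and $b_1$.
\end{enumerate}
Your proposed finite check of contractions of $r_\theta r_{\theta'}$ by forms in $K_\theta\cap K_{\theta'}$ does not touch this configuration, and it is vacuous for the pair $\{0,\infty\}$, since $K_0\cap K_\infty=0$.
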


\begin{proof}
Assume $Q\wedge C\ne0$.  Then $Q,C$ are independent and span a
two-plane $\mathcal P\le R$.

\smallskip
\emph{Low--low child.}
First suppose a useful gate five multiplies two wires in $\Aff+R$.
Its quartic part must equal $Q\wedge C$, otherwise the new high
direction cannot cancel against the one-dimensional seed state.
Changing the basis of the two factors by $\mathrm{GL}_2(\F)$ changes
their product only by available wires, because
$u(u+v)=uv+u$.  We may therefore align the two quadratic bases.
If the differences of the corresponding linear parts are $x,y$, the
cubic cancellation equation is
\begin{equation}\label{eq:quarticcubic}
 x\wedge C+y\wedge Q=0.
\end{equation}

Up to the $S_3$ action on the rational places, two-planes in $R$ have
three orbits:
\[
\begin{array}{c|c}
\mathcal P&\text{consequence of \eqref{eq:quarticcubic}}\\ \hline
\langle r_0,r_1\rangle
 &x\in P_1,\ y\in P_0\\
\langle r_0,r_1+r_\infty\rangle
 &x=0,\ y\in P_0\\
\langle r_0+r_1,r_0+r_\infty\rangle
 &x=y=0.
\end{array}
\tag{Q4}
\]
This follows by comparing the direct cubic components
$L\wedge r_0$, $L\wedge r_1$, $L\wedge r_\infty$.
In the first row, for example,
$x\wedge r_1+y\wedge r_0=0$ forces both summands to vanish separately.

In all three rows the degree-two contraction of
$xC+yQ$ lies in $R$: if $x\in P_\theta$, then
$x\,r_\theta$ has no cubic part and its quadratic part is a multiple
of $r_\theta$.  Thus, modulo $R$, the quadratic difference of the two
products belongs to $U\wedge L$ with $U=\langle x,y\rangle$.

If $\dim U\le1$, every nonzero form in $U\wedge L$ is decomposable;
a target form of this kind lies in $R$ by Lemma~\ref{lem:rankone}.
If $\dim U=2$, only the first row of (Q4) is possible, with nonzero
vectors from both $P_0$ and $P_1$.  Any target form
$\omega\in U\wedge L$ has alternating rank at most four and hence
Hankel rank at most two.  If it is outside $R$, the list in
Lemma~\ref{lem:ranktwo} says that its support is one of
$K_0,K_1,K_\infty,K_{\mathrm{deg}\,2}$.  By
Lemma~\ref{lem:support}, $U$ would be contained in that support.
But no one of these four support spaces contains simultaneously a
nonzero vector of $P_0$ and a nonzero vector of $P_1$.
Contradiction.

\smallskip
\emph{Child using the seed.}
Now suppose gate five uses $g$.  Modulo the current state, a product
with two $g$-containing factors can be replaced by one with exactly
one such factor, again using $U(U+V)=UV+U$.  Write
\[
 U=g+a,\qquad a,c\in\Aff+R.
\]
Toggling $c$ by $1$ changes the product by the existing wire $U$.
Thus a useful extension has a representative
\[
 F=Uc\in\Aff+T,\qquad F\notin\Aff+R.
\]
Boolean idempotence gives
\begin{equation}\label{eq:idempquartic}
 UF=F,\qquad Fc=F.
\end{equation}

Let $D=\pi_2(c)\in R$.  Degree six in $Uc$ gives
$(Q\wedge C)\wedge D=0$, hence $D\in\mathcal P$.
If $D=0$, degree three of $Fc=F$ makes the nonzero target quadratic
part $F_2$ annihilate the linear part of $c$.  Since
$F_2\notin R$, it has Hankel rank at least two, so wedge by $F_2$ is
injective on vectors.  Thus $c$ is constant, and $Fc=F$ forces
$c=1$, giving $F=U$, which still has the seed high part.  Hence
$D\ne0$.

Degree four in $Fc=F$ gives
\[
 F_2\wedge D=0.
\]
For nonzero $D\in R$, a target annihilator outside $R$ exists only
when $D$ is a singleton rational place.  Up to $S_3$, take $D=r_0$;
then
\[
 F_2=E_1+\varepsilon E_0.
\]
The plane $\mathcal P$ contains $r_0$, and under the stabilizer of
$r_0$ there are two possibilities:
\[
 \mathcal P_A=\langle r_0,r_1\rangle,\qquad
 \mathcal P_B=\langle r_0,r_1+r_\infty\rangle.
\]

Put
\[
 x=a_0,\quad y=b_0,\quad u=a_1,\quad v=b_1,
\]
and
\[
 \bar A=a_1+a_2+a_3,\qquad
 \bar B=b_1+b_2+b_3.
\]
From the degree-three part of $Fc=F$ one obtains
\[
 c=\delta+\rho x+\sigma y+xy.
\]
Indeed any other linear variable would create an uncancelled cubic
monomial from $(uy+xv+\varepsilon xy)\,c$.  The coefficient of $xy$
is one, so $c$ has odd truth-table parity on the four $(x,y)$ slices.
Moreover $F=0$ wherever $c=0$, while the complementary linear parts
of $F$ on distinct slices differ by
$\Delta y\,u+\Delta x\,v$ and therefore cannot vanish on two
different slices.  Hence $c=1$ on exactly three slices.

Choose a factor basis for the seed:
\[
 g=(\ell+r_0)(m+Z),\qquad
 Z=
 \begin{cases}
 r_1,&\mathcal P_A,\\
 r_1+r_\infty,&\mathcal P_B.
 \end{cases}
\]
Modulo affine terms, on a fixed $(x,y)$ slice the second factor has
quadratic complementary part
\[
 Q_A=\bar A\bar B
\quad\text{or}\quad
 Q_B=\bar A\bar B+a_3b_3,
\]
and slice-varying linear part
$x\bar B+y\bar A$.

For $\mathcal P_B$, $Q_B$ has alternating rank four.  The cubic
part of $U_s=F_s$ on any of the three $c=1$ slices therefore forces
the complementary linear part of the first factor to vanish.
Quadratic matching then forces the scalar multiplying $Q_B$ to equal
the $r_1$ coefficient of $a$ and forces the $r_\infty$ coefficient
to be the same.  The slice-varying
$x\bar B+y\bar A$ terms cancel, so the complementary linear part of
$U_s$ is constant on the three slices.  This contradicts the two
independent differences $u,v$ of $F_s$.

For $\mathcal P_A$, write the complementary first-factor part as
$p\bar A+q\bar B$.  If it is zero, the same constant-slice
contradiction follows.  If it is nonzero, the complementary quadratic
equation first forces the $r_\infty$ coefficient of $a$ to vanish:
otherwise one obtains either equality of two decomposable forms with
disjoint support, or the nondecomposable rank-four form
$\bar A\bar B+a_3b_3$.  The same equation then forces the other
complementary linear factor into
$\Span\{\bar A,\bar B\}$.  Hence every slice-varying complementary
linear term of $U_s$ lies in
$\Span\{\bar A,\bar B\}$, which contains neither $u$ nor $v$.
Again the three-slice differences of $F_s$ give a contradiction.

Thus neither a low--low nor a $g$-using useful child exists when
$\pi_4(g)\ne0$.
\end{proof}

\section{Cubic seeds and feedback annihilators}\label{sec:feedback}

By Proposition~\ref{prop:quartic} and
Lemma~\ref{lem:highnonzero}, a completion-capable seed has a nonzero
cubic high part.

\begin{lemma}[low-product normal form]\label{lem:lownf}
Let $p$ be a product of two wires from $\Aff+R$, with nonzero cubic
part and zero quartic part.  Modulo $\Aff+R$ there exist
$G\in R\setminus\{0\}$ and $N,z\in L$ such that
\[
 \pihi(p)=N\wedge G
\]
and
\begin{equation}\label{eq:nfquad}
 \pi_2(p)\equiv z\wedge N+\kappa_G(N)\pmod R,
\end{equation}
where $\kappa_G(N)=\pi_2(NG)$ is the Boolean degree-lowering
contraction.
\end{lemma}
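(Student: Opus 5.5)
We write the two factors as in the proof of Proposition~\ref{prop:prefix} and Proposition~\ref{prop:quartic},
\[
 u=\alpha+\ell+Q,\qquad v=\beta+m+C,\qquad \alpha,\beta\in\F,\ \ell,m\in L,\ Q,C\in R,
\]
and expand $p=uv$ directly in the Boolean algebra. The expansion has degree-four part $Q\wedge C$, which vanishes by hypothesis, and cubic part $\ell\wedge C+m\wedge Q$. Its degree-two part is $\ell\wedge m$, plus the Boolean contractions $\pi_2(\ell C)+\pi_2(mQ)+\pi_2(QC)$, plus the terms $\beta Q+\alpha C$, which lie in $R$. The lower-degree pieces lie in $\Aff$.

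The first step is to use $Q\wedge C=0$. As in the proof of Proposition~\ref{prop:prefix}, the three four-forms $r_\theta\wedge r_{\theta'}$ are independent, so $Q$ and $C$ are dependent and we can write $Q=\varepsilon G$, $C=\delta G$ with $G\in R$. Since the cubic part is nonzero, $G\neq0$ and $(\varepsilon,\delta)\ne(0,0)$.

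Next we normalize using the move $u(u+v)=uv+u$ from Proposition~\ref{prop:quartic}. Replacing the factor pair by a $\mathrm{GL}_2(\F)$-transform changes $p$ only by wires of $\Aff+R$. This lets us reach $\varepsilon=0$ and $\delta=1$, so that $C=G$ and $Q=0$. Then $p=(\alpha+\ell)(\beta+m+G)$. Its cubic part is $\ell\wedge G$, so we set $N=\ell$, which gives $\pihi(p)=N\wedge G$. In the general unnormalized case the same computation gives $N=\delta\ell+\varepsilon m$, matching Proposition~\ref{prop:prefix}.

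The quadratic part, modulo $R$ and affine terms, is then $\ell\wedge m+\pi_2(\ell G)$, since $\alpha G\in R$ and $\beta\ell$ is linear. Taking $z=m$ (up to sign, which is irrelevant in characteristic two) gives $z\wedge N+\kappa_G(N)$, which is \eqref{eq:nfquad}.

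The only delicate point is the bookkeeping in the normalization step. We must check three things:
\begin{itemize}[nosep]
\item the $\mathrm{GL}_2$ change really alters $p$ only by elements of $\Aff+R$, since the added term is itself one of the factors $u$ or $v$, which lie in $\Aff+R$;
\item the contraction $\pi_2(QC)=\varepsilon\delta\,\pi_2(G\cdot G)$ is harmless, because Boolean idempotence gives $G^2=G\in R$;
\item $\kappa_G$ is linear in $N$, so the ``modulo $R$'' statement is well defined.
\end{itemize}
Once these are checked, the lemma is a direct computation.
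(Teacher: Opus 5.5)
Your proof is correct and is essentially the paper's argument: $Q\wedge C=0$ together with the independence of the three four-forms $r_\theta\wedge r_{\theta'}$ forces $Q=\varepsilon G$ and $C=\delta G$, giving $N=\delta\ell+\varepsilon m$. Idempotence $G^2=G$ then places the whole $QC$ term in $R$, including the cubic contraction $\pi_3(QC)$ that your first expansion silently omits and that vanishes only because of this dependence. The only difference is cosmetic: you use the $\mathrm{GL}_2(\F)$ move $u(u+v)=uv+u$ to reduce to $(\varepsilon,\delta)=(0,1)$ with $z=m$, whereas the paper keeps $p$ fixed and chooses $z=\ell$ or $z=m$ case by case so that $\ell\wedge m=z\wedge N$; your change of $p$ lies in $\Aff+R$, so it affects neither $\pihi(p)$ nor $\pi_2(p)$ modulo $R$.
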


\begin{proof}
With notation $Q,C\in R$ as before, zero quartic part gives
$Q=\varepsilon G$, $C=\delta G$ for a common $G$.
Put $N=\delta\ell+\varepsilon m$.  Choose $z=\ell$ in the cases
$(\varepsilon,\delta)=(1,0),(1,1)$ and $z=m$ in the case $(0,1)$.
Then $\ell\wedge m=z\wedge N$.  Constants, $G^2=G$, and constant
multiples of $G$ lie in $\Aff+R$, leaving \eqref{eq:nfquad}.
\end{proof}

For a cubic $h$ define its target annihilator
\[
 \Ann_T(h)=\{t\in T:h\wedge t=0\}.
\]

\subsection{A direct annihilator classification}

The rational-place cubic spaces
\[
 I_\theta=L\wedge r_\theta
\]
form a direct sum:
\[
 I_0\oplus I_1\oplus I_\infty.
\]
Indeed, in a basis containing
$A_\theta,B_\theta$ for the three places, a cubic basis monomial
cannot contain two distinct rational-place pairs.

The following lemma is the exterior part of the feedback argument.

\begin{lemma}[annihilator dichotomy]\label{lem:anndichotomy}
Let
\[
 0\ne h=M\wedge G,\qquad 0\ne G\in R.
\]
If $\Ann_T(h)$ contains a direction outside $R$, then either
\begin{enumerate}[label=(\roman*)]
\item $h$ is anchored at a single rational place and
      $\dim\Ann_T(h)=3$, or
\item $h$ is anchored at a single rational place and
      $\dim\Ann_T(h)=2$.
\end{enumerate}
The second case is the only way an annihilator of dimension at most
two can meet $T\setminus R$.
\end{lemma}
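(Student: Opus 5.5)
We decompose $h$ along the direct sum $I_0\oplus I_1\oplus I_\infty$ and classify annihilators through the rank of $G$. Write $G=\sum_{\theta\in\Theta}r_\theta$ with $\Theta\subseteq\{0,1,\infty\}$ nonempty. Then
\[
 h=M\wedge G=\sum_{\theta\in\Theta}M\wedge r_\theta,
\]
so the components of $h$ in $I_\theta$ are $M\wedge r_\theta$. Each such component either vanishes, which happens exactly when $M\in P_\theta$, or it is nonzero.

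We say $h$ is \emph{anchored at a single rational place} when exactly one component $M\wedge r_\theta$ is nonzero. After the $S_3$ action we may take this place to be $\theta=0$. In that case $h=M\wedge r_0$, possibly after discarding components killed by $M\in P_{\theta'}$.

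\textbf{Step 1: several nonzero components.} Suppose at least two components are nonzero. Take $t\in\Ann_T(h)\setminus R$. Expanding $t$ in an adapted basis, I would compare the quintic monomials of $h\wedge t$ that contain a rational-place pair $A_\theta B_\theta$. This should force $t\wedge r_\theta$ to be annihilated modulo $M$ for each nonzero component, i.e.\ $t\wedge r_\theta\wedge M=0$ for two distinct places $\theta$.

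From the rank-two list (Lemma~\ref{lem:ranktwo}) and the supports $K_0,K_1,K_\infty,K_{\mathrm{deg}\,2}$, a target form outside $R$ satisfying $t\wedge M\wedge r_\theta=0$ must have $\Supp(t)$ essentially contained in $P_\theta+\langle M\rangle$. We use Lemma~\ref{lem:support} here. Demanding this for two different places is impossible, since no support space meets two distinct places' pairs. Hence $\Ann_T(h)\subseteq R$, which proves the ``if'' direction's anchoring claim.

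\textbf{Step 2: the anchored case.} Now let $h=M\wedge r_0$ with $M\notin P_0$. For $t\in T$, the condition $M\wedge r_0\wedge t=0$ splits into two parts:
\begin{enumerate}[label=(\alph*)]
\item all of $R$'s contribution: $r_0$ always annihilates, and $r_1,r_\infty$ annihilate iff $M\in P_1$, resp.\ $M\in P_\infty$;
\item the contribution from outside $R$: only the tangent forms at $0$, namely $E_1$ and $E_0+E_1$, can satisfy $r_0\wedge t\wedge M=0$ with $t\notin R$. Their support is $K_0\supset P_0$, and $r_0\wedge E_1=a_0b_0\wedge(a_1b_0+a_0b_1)$ vanishes, so the condition holds automatically for $E_1$.
\end{enumerate}
For (b), I would compute $r_0\wedge t$ for general $t$ and show it vanishes exactly for $t\in\langle r_0,E_1\rangle$. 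More generally, the map $t\mapsto r_0\wedge t\wedge M$ restricted to $T/\langle r_0,E_1\rangle$ should be injective unless $M$ lies in a special subspace that adds at most one further direction from $R$.

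The dimension is therefore $2$ when the annihilator is exactly $\langle r_0,E_1\rangle$, and $3$ when $M$ also lies in the pair of another place, or in the special subspace, adding one more direction. Bounding the dimension above by $3$ requires excluding simultaneous membership of $M$ in $P_1$ and $P_\infty$, and these pairs intersect trivially.

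\textbf{Main obstacle.} The main obstacle is the injectivity claim in Step 2(b): showing that no target form outside $\langle r_0,E_1\rangle+R$, including the degree-two place forms $D_*$ and tangents at other places, can be killed by $M\wedge r_0$. I would handle it by direct monomial comparison of $a_0b_0\wedge M\wedge t$, using that $t$ restricted away from $a_0,b_0$ has Hankel rank at least two. The last sentence of the lemma then follows immediately: an annihilator of dimension at most two meeting $T\setminus R$ must fall under case (ii).
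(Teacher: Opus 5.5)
There is a genuine gap. Step~1 rests on a false separation, and Step~2 misclassifies the annihilators outside $R$.

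\textbf{Step 1.} The direct sum $I_0\oplus I_1\oplus I_\infty$ is a statement about cubics. After you wedge with $t$, the summands $M\wedge r_\alpha\wedge t$ and $M\wedge r_\beta\wedge t$ can share quintic monomials, because $t$ itself contains $A_\beta B_\beta$-type terms. So $h\wedge t=0$ does not force $t\wedge M\wedge r_\theta=0$ for each nonzero component.

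A concrete counterexample: take $G=r_0+r_\infty$, $M=a_1$, $t=G$. Then $h\wedge t=M\wedge G\wedge G=0$ by Remark~\ref{rem:square}, but $M\wedge r_0\wedge t=a_1\wedge a_0\wedge b_0\wedge a_3\wedge b_3\neq0$. Writing $t=\sum c_sE_s$ for this $(G,M)$, the monomial $a_0a_1a_3b_0b_3$ receives $c_0+c_6$, and the other monomials give $c_1=\dots=c_5=0$. So $\Ann_T(h)=\langle r_0+r_\infty\rangle$, and the components do not vanish separately.

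A monomial comparison is linear in $t$, so it cannot treat $t\notin R$ differently from $t\in R$. You need a computation that keeps track of these cross terms. The paper does exactly this: it splits by the weight of $G$ and row-reduces $t\mapsto M\wedge G\wedge t$ in the adapted basis.

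Your support step also misuses Lemma~\ref{lem:ranktwo}. The condition $t\wedge M\wedge r_\theta=0$ does not force $t$ to have Hankel rank at most two, nor $\Supp(t)\subseteq P_\theta+\langle M\rangle$. For example, $E_1$ satisfies it for every $M$, yet $\Supp(E_1)=K_0$. And $E_2$, whose Hankel rank is three, satisfies it for $M=a_1$.

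\textbf{Step 2.} Claim (b) is false. Since $a_0b_0\wedge E_2=a_0b_0\wedge a_1b_1$, the direction $E_2\notin R$ annihilates $M\wedge E_0$ whenever $M\bmod P_0$ is a nonzero vector of $\langle a_1,b_1\rangle$. This is precisely the hard tangent class that the rest of the paper is built around.

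Your proposed injectivity premise, that the part of $t$ free of $a_0,b_0$ has Hankel rank at least two, fails for exactly three tails: $a_1b_1$, $a_3b_3$ and $\bar A\bar B$. The first of these contributes a third direction that is not in $R$, contrary to your description. There is also a smaller error in (a): $r_1$ (resp.\ $r_\infty$) annihilates iff $M\in P_0+P_1$ (resp.\ $P_0+P_\infty$), not $M\in P_1$ (resp.\ $P_\infty$).

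\textbf{The missing idea} is the tail reduction of Lemma~\ref{lem:tail}. Wedging by $r_0=a_0b_0$ sees only the tail $\bar t$ of $t$. That tail is the coefficient form of the product $(a_1+a_2z+a_3z^2)(b_1+b_2z+b_3z^2)$, and $\bar t=0$ exactly when $t\in\langle E_0,E_1\rangle$. The condition then becomes $\bar M\wedge\bar t=0$ among the six tail variables. This forces either $\bar t=0$, or $\bar t$ is one of the three tail rational places and $\bar M$ lies in its support plane $P'_0$, $P'_\infty$ or $P'_1$. These planes are pairwise disjoint, so $\dim\Ann_T(h)$ is $3$ exactly for those classes of $M$ and $2$ otherwise.
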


\begin{proof}
The calculation is most transparent by the weight of $G$ in the
basis $r_0,r_1,r_\infty$.

If $G=r_0+r_1+r_\infty$, direct comparison of the three components
$I_\theta$ gives an annihilator of dimension at most two, entirely
contained in $R$.

If $G=r_\alpha+r_\beta$, an annihilator direction outside $R$ exists
only when $M\in P_\alpha\cup P_\beta$.  In that event one of the two
cubic components vanishes and
$h$ collapses to the other single anchor; the target annihilator has
dimension three.

The remaining case is $G=r_\theta$, already anchored.  For
$\theta=0$, $E_0$ and $E_1$ always annihilate
$M\wedge E_0$, so the annihilator has dimension at least two and at
most three.  The other places are equivalent under
$PGL_2(\F)\cong S_3$.

For completeness, the seven-column map being row-reduced is
\[
 T\longrightarrow\Lambda^5L,\qquad
 t\longmapsto M\wedge G\wedge t.
\]
In the adapted basis
\[
 e_0=A(0),\ e_1=A(1),\ e_\infty=A(\infty),\ e_*=a_1
\]
and analogously on the $B$ side, the cross matrix of
$t=\sum c_sE_s$ is
\[
\begin{pmatrix}
c_0+c_4&c_2+c_4&c_2+c_3+c_4+c_5&c_1+c_2+c_3+c_4\\
c_2+c_4&c_4&c_4+c_5&c_3+c_4\\
c_2+c_3+c_4+c_5&c_4+c_5&c_4+c_6&c_3+c_5\\
c_1+c_2+c_3+c_4&c_3+c_4&c_3+c_5&c_2+c_4
\end{pmatrix}.
\]
Substitution gives exactly the three cases above.
\end{proof}

Once a single anchor is known, the dimension-three case has a more
conceptual description.

\begin{lemma}[tail-place classification]\label{lem:tail}
Let $h=M\wedge E_0\ne0$.  Modulo
$P_0=\Span\{a_0,b_0\}$,
\[
 \dim\Ann_T(h)=3
\]
if and only if $[M]$ is a nonzero vector in one of
\[
\begin{aligned}
 P'_0&=\Span\{a_1,b_1\},\\
 P'_\infty&=\Span\{a_3,b_3\},\\
 P'_1&=\Span\{\bar A,\bar B\},
 \qquad
 \bar A=a_1+a_2+a_3,\quad
 \bar B=b_1+b_2+b_3.
\end{aligned}
\]
The corresponding annihilators are
\[
\begin{array}{c|c}
[M]\text{ plane}&\Ann_T(h)\\ \hline
P'_0&\langle E_0,E_1,E_2\rangle\\
P'_\infty&\langle E_0,E_1,E_6\rangle\\
P'_1&\langle E_0,E_1,E_2+E_3+E_4+E_5+E_6\rangle.
\end{array}
\]
All other nonzero classes have annihilator
$\langle E_0,E_1\rangle$.
\end{lemma}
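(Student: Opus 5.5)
The plan is to reduce the condition $h\wedge t=0$ to a statement about two-forms on the six-dimensional quotient $\bar L=L/P_0$. Since $h=M\wedge a_0\wedge b_0$, any term of $t$ containing $a_0$ or $b_0$ is killed by $a_0\wedge b_0$. Choose the complement $\Span\{a_1,a_2,a_3,b_1,b_2,b_3\}$ of $P_0$. Then $\Lambda^\bullet L$ splits, and
\[
 M\wedge a_0\wedge b_0\wedge t=0
 \iff [M]\wedge\bar t=0\ \text{in }\Lambda^3\bar L .
\]
Here $[M]$ is the class of $M$ in $\bar L$, and $\bar t$ is the image of $t$ in $\Lambda^2\bar L$. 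In particular the condition depends only on $[M]$, which explains the phrase ``modulo $P_0$'' in the statement. We have $[M]\neq0$, since $h\neq0$.

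For $t=\sum_s c_sE_s$, the reduction $\bar t$ keeps exactly the monomials $a_ib_j$ with $i,j\ge1$. Its cross matrix is therefore the $3\times3$ Hankel matrix $\bar H$ with entries $c_{i+j}$ for $1\le i,j\le3$, which involves only $c_2,\dots,c_6$. Consequently $E_0$ and $E_1$ reduce to zero and always lie in $\Ann_T(h)$. The map $t\mapsto\bar t$ restricted to $\Span\{E_2,\dots,E_6\}$ is injective. This gives
\[
 \Ann_T(h)=\langle E_0,E_1\rangle\oplus\{t\in\Span\{E_2,\dots,E_6\}:[M]\wedge\bar t=0\}.
\]

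Next I use the standard exterior fact that, for a nonzero vector $[M]$, a two-form $\omega$ satisfies $[M]\wedge\omega=0$ if and only if $\omega=[M]\wedge y$ for some $y$. Such an $\omega$ is zero or decomposable. It follows that a nonzero annihilating $\bar t$ must have $\rk\bar H=1$ by \eqref{eq:rankdouble}, and that $[M]$ must lie in its two-dimensional support. So I classify the rank-one $3\times3$ binary Hankel matrices exactly as in the proof of Lemma~\ref{lem:rankone}. Write $\bar H=vv^T$. The Hankel condition on anti-diagonals forces $v\in\{(1,0,0),(0,0,1),(1,1,1)\}$. These give the reduced forms $a_1b_1$ from $E_2$, $a_3b_3$ from $E_6$, and $\bar A\bar B$ from $E_2+\cdots+E_6$, with supports $P'_0$, $P'_\infty$ and $P'_1$ respectively.

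Finally, I check that the three planes $P'_0$, $P'_\infty$, $P'_1$ intersect pairwise trivially in $\bar L$; this is a direct inspection of the coordinates. Hence $[M]$ lies in at most one of them, and the set of annihilating $t$ outside $\langle E_0,E_1\rangle$ has dimension at most one. A sum of two distinct rank-one reductions has rank two and alternating rank four, so it is not divisible by $[M]$. Therefore $\dim\Ann_T(h)=3$ exactly when $[M]$ is nonzero in one of the three planes, with the third generator as listed in the table. In every other case $\Ann_T(h)=\langle E_0,E_1\rangle$. The only step needing care is the first reduction, namely that wedging with $a_0\wedge b_0$ is injective from $\Lambda^\bullet$ of the chosen complement. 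This holds because that complement meets $P_0$ trivially. Everything after that is the finite rank-one Hankel classification.
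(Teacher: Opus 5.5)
Your proposal is correct and follows essentially the same route as the paper: wedging with $E_0=a_0b_0$ reduces the problem to the five-dimensional three-term tail target with its $3\times3$ Hankel matrix, annihilation by the nonzero vector $[M]$ forces a decomposable tail form whose support contains $[M]$, and the rank-one classification yields $a_1b_1$, $a_3b_3$, $\bar A\bar B$ with pairwise disjoint supports $P'_0,P'_\infty,P'_1$. The only difference is that you spell out steps the paper leaves implicit, namely the injectivity of $\omega\mapsto a_0\wedge b_0\wedge\omega$ on the complement and the explicit $vv^T$ check for the $3\times3$ tail.
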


\begin{proof}
Wedging first by $E_0=a_0b_0$ kills every term containing $a_0$ or
$b_0$.  The remaining target is exactly the five-dimensional
coefficient space for the product of the three-term tails
\[
 a_1+a_2z+a_3z^2,\qquad b_1+b_2z+b_3z^2.
\]
If a nonzero tail two-form $t$ satisfies $M\wedge t=0$, then either
$t$ has alternating rank two and $M$ lies in its support, or wedge by
$t$ is injective on vectors.  By Lemma~\ref{lem:rankone}, applied to
the $3\times3$ tail Hankel matrix, the only nonzero decomposable tail
targets are the three rational places
\[
 a_1b_1,\qquad a_3b_3,\qquad \bar A\bar B.
\]
Their support planes are exactly $P'_0,P'_\infty,P'_1$, which are
pairwise disjoint.  This gives the table.
\end{proof}

\subsection{A useful gate cannot use the seed}

We now analyze a hypothetical useful gate five that uses $g$.

First note that after anchoring at $0$ we may choose the generator of
the seed coset in the form
\begin{equation}\label{eq:seedrep}
 g=M(z+E_0)+b,\qquad b\in\Aff+R.
\end{equation}
Indeed, in Lemma~\ref{lem:lownf}, if the common $G$ contains a second
place $r_\phi$, the direct sum of the $I_\theta$ forces
$M\in P_\phi$, so $Mr_\phi$ is purely quadratic and lies in $R$.
Thus the extra place can be removed modulo $\Aff+R$.

\begin{proposition}[no $g$-using first feedback]\label{prop:nogfirst}
No useful gate five can use the seed $g$.
\end{proposition}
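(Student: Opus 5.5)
We mirror the ``child using the seed'' half of Proposition~\ref{prop:quartic}, now in the cubic regime. The seed is taken in the anchored form \eqref{eq:seedrep}, $g=M(z+E_0)+b$, with cubic high part $h=M\wedge E_0\ne0$.

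\emph{Reduction to a single seed factor.} Exactly as before, the identity $U(U+V)=UV+U$ lets us replace a useful gate five by one with exactly one $g$-containing factor. So we write $U=g+a$ with $a\in\Aff+R$, and take the other factor to be $c\in\Aff+R$. Toggling $c$ by a constant changes the product only by the available wire $U$. Usefulness therefore gives a representative
\[
F=Uc\in\Aff+T,\qquad F\notin\Aff+R,
\]
and Boolean idempotence gives $Fc=F$ and $UF=F$, as in \eqref{eq:idempquartic}. Write $D=\pi_2(c)\in R$ and $F_2$ for the quadratic target part of $F$. We may take $F_2\notin R$.

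\emph{Degree bookkeeping.} The top degree of $Uc$ is $h\wedge D$, in degree five. Since $F$ has degree at most two, the degree-five and degree-four parts must cancel. The degree-four part of $Fc=F$ gives $F_2\wedge D=0$.

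First suppose $D=0$. Then the degree-three part of $Fc=F$ says that $F_2$ annihilates the linear part of $c$. Because $F_2\notin R$, Lemma~\ref{lem:rankone} gives it Hankel rank at least two, so wedging by $F_2$ is injective on vectors. Hence $c$ is constant, so $c=1$ and $F=U$, which still carries the nonzero cubic part $h$. This is a contradiction.

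So $D\ne0$. As in Proposition~\ref{prop:quartic}, a target annihilator of $D$ outside $R$ exists only when $D$ is a singleton place. The degree-five condition $h\wedge D=0$, together with the direct sum $I_0\oplus I_1\oplus I_\infty$, should then force $D=r_0$, the anchor of $h$. Consequently $F_2=E_1+\varepsilon E_0$.

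\emph{The slice argument.} Use the coordinates $x=a_0$, $y=b_0$, $u=a_1$, $v=b_1$.
\begin{enumerate}[label=(\arabic*)]
\item The cubic part of $Fc=F$ forces $c=\delta+\rho x+\sigma y+xy$.
\item The parity argument from Proposition~\ref{prop:quartic} shows that $c=1$ on exactly three $(x,y)$ slices.
\item On each such slice $s$ we have $U_s=F_s$.
\item The complementary linear parts of $F_s$ on distinct slices differ by $\Delta y\,u+\Delta x\,v$, so over the three slices they span both $u$ and $v$.
\end{enumerate}
On the other side, restricting $U=M(z+E_0)+b+a$ to a slice makes $E_0=xy$ constant. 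The cubic part of $U$ then vanishes on the slice exactly when the restriction of $M$ becomes affine. This is possible only if $M\in P_0+\text{constants}$, after which $h=M\wedge E_0=0$ modulo $P_0$. That contradicts $h\ne0$, unless $M$'s class is absorbed by the slice variables. It is here that Lemma~\ref{lem:tail} enters: the class $[M]$ modulo $P_0$ lies in $P'_0$, $P'_\infty$, $P'_1$, or in none of them.

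In each case, cubic matching of $U_s=F_s$ should pin down the slice dependence of the linear part of $U_s$. In the $P'_\infty$ and $P'_1$ cases, and in the generic case, that slice dependence lies in a plane avoiding $u$ and $v$, which contradicts the three-slice differences.

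\emph{Main obstacle.} The hard case is $[M]\in P'_0=\Span\{a_1,b_1\}$. There $M$ itself can carry $u$ or $v$, so the slice-varying terms $xM$ and $yM$ could plausibly reproduce $\Delta y\,u+\Delta x\,v$. To handle it, I would write $M=\mu a_1+\nu b_1$ modulo $P_0$ and expand $U_s$ explicitly. The slice-varying linear part coming from $M\cdot xy$ is $(\text{const})\cdot M$, which is a single direction. It can supply only one of $u,v$ unless $\mu=\nu=1$, and even then only the combination $u+v$, never both independently. The two independent differences $u$ and $v$ required by $F_s$ then cannot both appear. This single-direction counting is the step I expect to need the most care, including checking that the contributions from $z$ and $a$ cannot add a second direction.
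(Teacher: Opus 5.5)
Your setup (a single seed factor, $F=Uc$, the case $D=0$, then $c=\delta+\rho x+\sigma y+xy$ and the three slices) matches the paper. Two steps, however, do not go through.

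\emph{First gap: $D=r_0$ is not forced.} The condition $h\wedge D=0$ does not imply $D=r_0$. With $h=M\wedge E_0$ one has $h\wedge r_1=M\wedge r_0\wedge r_1$, which vanishes whenever $M\in P_0+P_1$, i.e.\ $[M]\in P'_1$ (e.g.\ $M=A(1)$). Similarly $h\wedge r_\infty=0$ whenever $[M]\in P'_\infty$. The direct sum $I_0\oplus I_1\oplus I_\infty$ concerns cubics and says nothing about these degree-five products.

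What closes this case is the identity $UF=F$, which you record but never use. Its degree-five part is $h\wedge F_2=0$, so $F_2\in\Ann_T(h)$. The target annihilators of $r_1$ and $r_\infty$ are $\langle r_1,E_1+E_3+E_5\rangle$ and $\langle E_5,E_6\rangle$ (the tangent forms of Section~\ref{sec:ranktwo}). By Lemma~\ref{lem:tail} they meet $\Ann_T(h)$ only in $\langle r_1\rangle$ and $\langle E_6\rangle$, both inside $R$. With this repair your ordering is slightly more economical than the paper's: you pin $D=r_0$ first and then read off $F_2\in\langle E_0,E_1\rangle$. The paper instead starts from $\Ann_T(h)$ and must eliminate an $E_2$ coefficient by a separate determinant computation.

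\emph{Second gap: the slice argument lacks its decisive constraint.} This is the more serious one. On an $(x,y)$-slice the cubic $M\wedge x\wedge y$ disappears automatically. So there is no condition that ``$M$ becomes affine'', and the case split on $[M]$ decides nothing.

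On a slice, write $M=m+\mu$ and $z+E_0=n+\nu$, with $m,n$ linear in the six complementary variables and $\mu,\nu$ constants. Let $\lambda_1,\lambda_\infty$ be the $r_1,r_\infty$ coefficients of $a+b$. Then the slice differences of the complementary linear part of $U_s$ are $\Delta\nu\,m+\Delta\mu\,n+\lambda_1(\Delta x\,\bar B+\Delta y\,\bar A)$.

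The term $\Delta\mu\,n$ coming from $z$ is exactly the second direction you hoped to exclude, and linear bookkeeping alone cannot exclude it. Take $M=a_0+a_1$, $z=b_0+b_1$, $c=1+xy$. Then $m=u$, $n=v$, $\mu=x$, $\nu=y+xy$, and these reproduce $\Delta y\,u+\Delta x\,v$ on the three slices where $c=1$.

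What rules this out is the quadratic part of $U_s=F_s$ in the complementary variables. Since $F_2=xv+uy+\varepsilon xy$ contributes nothing there, you get $m\wedge n=\lambda_1\,\bar A\wedge\bar B+\lambda_\infty\,a_3\wedge b_3$. This forces one of three outcomes: $\dim\langle m,n\rangle\le1$, or $\langle m,n\rangle=\langle\bar A,\bar B\rangle$, or $\langle m,n\rangle=\langle a_3,b_3\rangle$. The mixed case is impossible because $\bar A\bar B+a_3b_3$ has rank four. In every admissible case the left side lies in $\langle m,n\rangle$, and none of these spaces contains both $u=a_1$ and $v=b_1$. That is the paper's argument, and it is uniform in $[M]$.
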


\begin{proof}
As in the quartic proof, a product with two $g$-containing factors is
equivalent modulo the state to one with exactly one.  Write
\[
 U=g+a,\qquad a,c\in\Aff+R.
\]
After toggling $c$ by one if necessary, a useful extension has a
low representative
\[
 F=Uc\in\Aff+T,\qquad F\notin\Aff+R.
\]
Again
\[
 UF=F,\qquad Fc=F.
\]
Degree five in the first identity gives
\[
 F_2\in\Ann_T(h)\setminus R.
\]
Lemma~\ref{lem:anndichotomy} reduces us to a single rational anchor.

Take that anchor to be $0$.  If $\dim\Ann_T(h)=2$, then
\[
 \Ann_T(h)=\langle E_0,E_1\rangle,
 \qquad
 F_2=E_1+\varepsilon E_0.
\]
If the annihilator has dimension three, Lemma~\ref{lem:tail} shows
that every outside-$R$ coset either has the same first-jet form
$E_1+\varepsilon E_0$, or, only in the $P'_0$ row, has a nonzero
$E_2$ coefficient.

We dispose first of an $E_2$ coefficient.  Write
\[
 F_2=E_2+\alpha E_1+\beta E_0,\qquad
 c=\delta+L+C,\quad C\in R.
\]
Degree four of $Fc=F$ gives $F_2\wedge C=0$.  On the columns
$E_0,E_6,r_1$, the three rows
\[
 a_0a_1b_0b_1,\quad
 a_2a_3b_0b_1,\quad
 a_2a_3b_0b_3
\]
give the matrix
\[
\begin{pmatrix}
1&0&1+\beta\\
0&0&1\\
0&1&1
\end{pmatrix},
\]
whose determinant is one.  Hence $C=0$.
The leading $3\times3$ minor of the Hankel matrix of $F_2$ is also
nonsingular, so $\rk H(F_2)=3$ and
$L\mapsto F_2\wedge L$ is injective.  Degree three gives $L=0$.
Thus $c$ is constant, $Fc=F$ forces $c=1$, and $F=U$ still has the
nonzero seed high part.  Contradiction.

It remains to rule out
\[
 F_2=E_1+\varepsilon E_0.
\]
Write
\[
 c=\delta+L+C,\qquad C\in R.
\]
The degree-four part of $Fc=F$ is $F_2\wedge C=0$.  Directly on
$R=\langle E_0,E_6,r_1\rangle$ its kernel is
$\langle E_0\rangle$.  If $C=0$, then the degree-three part gives
$F_2\wedge L=0$; since $F_2$ has alternating rank four, wedge by
$F_2$ is injective on vectors, so $L=0$ and $c$ is constant.  The
identity $Fc=F\ne0$ then forces $c=1$, giving $F=U$ and retaining the
nonzero seed high part.  Therefore
\[
 C=E_0.
\]

Put
\[
 x=a_0,\quad y=b_0,\quad u=a_1,\quad v=b_1.
\]
Expanding the degree-three part of $Fc=F$ now forces the remaining
linear part of $c$ into $\langle x,y\rangle$: any coefficient on a
different variable creates an uncancelled cubic monomial from
$(uy+xv+\varepsilon xy)L$.  Hence
\[
 c=\delta+\rho x+\sigma y+xy.
\]
The coefficient of $xy$ is one, so $c$ has odd parity on the four
$(x,y)$ slices.  Since $F=0$ wherever $c=0$ and the complementary
linear parts of $F$ on distinct slices differ by
$\Delta y\,u+\Delta x\,v$, $F$ cannot vanish identically on two
different slices.  Thus $c=1$ on exactly three of the four slices.

Using the seed representative \eqref{eq:seedrep}, write
\[
 M=m+Ax+By,\qquad z=n+Cx+Dy,
\]
where $m,n$ are linear forms in the six complementary variables.
Write the quadratic part of $a$ as
\[
 \lambda_0r_0+\lambda_1r_1+\lambda_\infty r_\infty.
\]
On each of the three $c=1$ slices, equality $U_s=F_s$ first gives the
complementary quadratic equation
\begin{equation}\label{eq:Qfeedback}
 m\wedge n
 =
 \lambda_1\,\bar A\wedge\bar B
 +\lambda_\infty\,a_3\wedge b_3.
\end{equation}
The four possibilities for $(\lambda_1,\lambda_\infty)$ are:
\[
\begin{array}{c|c}
(0,0)&\dim\langle m,n\rangle\le1\\
(1,0)&\langle m,n\rangle=\langle\bar A,\bar B\rangle\\
(0,1)&\langle m,n\rangle=\langle a_3,b_3\rangle\\
(1,1)&\text{impossible}.
\end{array}
\]
The last line follows because
$\bar A\bar B+a_3b_3$ has alternating rank four.

Subtracting two of the three slice equations gives the genuine
linear relation
\begin{equation}\label{eq:Lfeedback}
 \Delta\mu\,n+\Delta\nu\,m+
 \lambda_1(\Delta x\,\bar B+\Delta y\,\bar A)
 =
 \Delta y\,u+\Delta x\,v.
\end{equation}
Any three corners of $\F^2$ supply two independent differences, so
the right sides span $\langle u,v\rangle$.  In every admissible row
of \eqref{eq:Qfeedback}, the left sides lie in
$\langle m,n\rangle$.  The first row has dimension at most one; the
second is $\langle\bar A,\bar B\rangle$; the third is
$\langle a_3,b_3\rangle$.  None contains both $u=a_1$ and $v=b_1$.
Contradiction.

Crucially, the argument from \eqref{eq:Qfeedback} through
\eqref{eq:Lfeedback} did not assume
$\dim\Ann_T(h)\le2$: it applies equally to all three
dimension-three tail classes of Lemma~\ref{lem:tail}.  Thus the
proposition excludes the potentially troublesome
$g$-using first-jet case as well as the second-jet skip.
\end{proof}

\begin{remark}[why the stronger first-feedback exclusion is useful]
Lemma~\ref{lem:tail} contains six ``soft'' dimension-three annihilator
classes, namely the nonzero vectors in $P'_\infty$ and $P'_1$, in
addition to the three hard tangent classes in $P'_0$.  If a
$g$-using gate were allowed to install the first jet, those six classes
would have to be carried into the saturation analysis.  Proposition
\ref{prop:nogfirst} removes all seed-reusing gate-five branches at once.
The remaining low--low branch in Proposition~\ref{prop:firstjet} then
forces the hard tangent class $P'_0$, so Section~\ref{sec:saturation}
needs to saturate only that geometry.
\end{remark}

\subsection{The first useful gate is the first Hasse jet}

Since gate five cannot use $g$, it is a product of two low wires from
$\Aff+R$.  We can now sharpen the feedback conclusion.

\begin{proposition}[first low--low feedback]\label{prop:firstjet}
A useful gate five forces, after a $PGL_2(\F)$ change of rational
place,
\[
 h=M\wedge E_0,\qquad
 [M]\in\langle a_1,b_1\rangle\setminus\{0\}
 \quad\text{mod }P_0,
\]
and the new target direction is
\[
 E_1=D^{(1)}_0C
 \quad\text{mod }R.
\]
\end{proposition}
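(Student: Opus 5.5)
By Proposition~\ref{prop:nogfirst}, gate five is a product $p=u'v'$ of two wires $u',v'\in\Aff+R$. The plan is to compare $p$ with the seed $g=uv$ and read off the constraints. Since $p$ is useful, some combination $p+\lambda g+w$ with $w\in V_4$ lies in $\Aff+T\setminus(\Aff+R)$. Its high part must therefore vanish: $\pihi(p)=\lambda\,\pihi(g)$.

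\emph{Step 1: $p$ must use the seed.} Suppose $\lambda=0$. Then $p$ is a low product with no high part, and the argument of Proposition~\ref{prop:prefix} puts its new target direction in $R$. This is impossible, since $R\subseteq V_4$ already. So $\lambda=1$ and $\pihi(p)=\pihi(g)$.

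\emph{Step 2: normal forms and the cubic cancellation.} By Proposition~\ref{prop:quartic} both high parts are purely cubic. Lemma~\ref{lem:lownf} then gives
\[
 \pihi(g)=N\wedge G,\qquad \pihi(p)=N'\wedge G',
\]
and \eqref{eq:seedrep} lets us anchor $g$ as $h=M\wedge E_0$. I would show $G'=G$ up to the same removal of extra places, using the direct sum $I_0\oplus I_1\oplus I_\infty$. Once $G'=G$, the equality $N'\wedge E_0=M\wedge E_0$ forces $N'\equiv M \pmod{P_0}$.

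\emph{Step 3: the quadratic difference.} Using \eqref{eq:nfquad}, the quadratic part of $p+g$ is congruent modulo $R$ to
\[
 (z'+z)\wedge M+\bigl(\kappa_{E_0}(N')+\kappa_{E_0}(N)\bigr).
\]
The contraction terms with $N'-N\in P_0$ give only multiples of $r_0$, so the quadratic part lies in $\langle M\rangle\wedge L$ plus $R$. A nonzero target element of this form lies in $M\wedge L$, hence is decomposable unless the $R$-correction contributes. The real content is the equation
\[
 \omega\equiv x\wedge M \pmod R,\qquad \omega\in T\setminus R .
\]
Here I would exploit that $\omega+\rho$, for a suitable $\rho\in R$, has a decomposable-plus-$R$ shape. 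Wedging with $E_0$ then gives $\omega\wedge h=0$ modulo $R\wedge h$. So $\omega$ lies in $\Ann_T(h)$ modulo the image of $R$, and Lemma~\ref{lem:anndichotomy} together with Lemma~\ref{lem:tail} applies.

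\emph{Step 4: classification.} The outside-$R$ annihilator classes are the first-jet forms $E_1+\varepsilon E_0$, and the extra tail directions $E_2$, $E_6$, and $E_2+\cdots+E_6$ in the three dimension-three rows. For each class I would run the support argument. Take $\omega$, reduced modulo $R$, of Hankel rank $\le 2$, and match it against the list in Lemma~\ref{lem:ranktwo}. Lemma~\ref{lem:support} should force $M$, and hence the two-plane $\langle x,M\rangle$, into the support $K_0$. That yields $[M]\in\langle a_1,b_1\rangle$ and $\omega\equiv E_1$, which is the first Hasse jet $D^{(1)}_0C$ at the anchored place.

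The $P'_\infty$ and $P'_1$ rows are excluded because $E_6$ and the $r_1$-type tail vector already lie in $R$ modulo the annihilator. Any new direction from those rows would therefore reduce into $R$, or would require support $K_\infty$ or $K_1$, which does not contain $[M]$.

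I expect the main obstacle to be Step 3, namely pinning down exactly how the $R$-ambiguity and the contractions $\kappa_G$ enter. Everything else is finite exterior linear algebra already tabulated. As a fallback, I would verify the three tail rows of Lemma~\ref{lem:tail} directly in the coordinates $x=a_0$, $y=b_0$, $u=a_1$, $v=b_1$, as in Proposition~\ref{prop:nogfirst}.
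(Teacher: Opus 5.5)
\textbf{There is a genuine gap. It is in Step~3, which is where the content of the proposition lives.} Write $N=M$ and $N'=M+p_0$ with $p_0\in P_0$. Then the normal forms \eqref{eq:nfquad} give
\[
 z'\wedge N'+z\wedge N=(z+z')\wedge M+z'\wedge p_0 ,
\]
and the contraction difference $\kappa_{E_0}(p_0)$ lies in $R$. You dropped the cross term $z'\wedge p_0$, and your resulting formula is false. With it, $\omega+\rho=x\wedge M$ would be a decomposable element of $T$, so Lemma~\ref{lem:rankone} would put $\omega$ in $R$, and no low--low gate five could ever be useful. But take $g=a_1(b_0+b_1+r_0)$ and $p=(a_0+a_1)(b_1+r_0)$. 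Both have cubic part $a_0a_1b_0$, and $g+p=E_0+E_1$.

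The annihilator detour in Steps~3--4 cannot repair this. Since $E_0\wedge E_1=0$, we have $E_1\in\Ann_T(M\wedge E_0)$ for every $M$, so that condition cannot single out $[M]\in\langle a_1,b_1\rangle$. The missing argument keeps the cross term:
\begin{itemize}
\item If $p_0=0$, the difference is decomposable and hence lies in $R$.
\item If $p_0\ne0$, the difference lies in $U\wedge L+R$ with $U=\langle M,p_0\rangle$ two-dimensional. A representative outside $R$ chosen in $U\wedge L$ is not decomposable, so it has alternating rank four and Hankel rank two.
\item It therefore appears in Lemma~\ref{lem:ranktwo}, with support $K_0$, $K_1$, $K_\infty$ or $K_{\mathrm{deg}\,2}$.
\item Lemma~\ref{lem:support} puts $p_0$ in that support, and only $K_0$ meets $P_0$. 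Hence $U\subseteq K_0$, $[M]\in\langle a_1,b_1\rangle\setminus\{0\}$, and the new coset is $E_1+R$.
\end{itemize}

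\textbf{Step~2 is also circular.} \eqref{eq:seedrep} removes extra places only when $h$ is already known to lie in a single $I_\theta$. In Proposition~\ref{prop:nogfirst} that anchoring came from Lemma~\ref{lem:anndichotomy}. Here, $h=M\wedge E_0$ is part of what must be proved. You first need to compare components in $I_0\oplus I_1\oplus I_\infty$. This shows that $G\ne G'$ occurs only in the patterns $(r_0,r_0+r_1)$ and $(r_0+r_1,r_0+r_1+r_\infty)$. Neither pattern yields a new target. The first is excluded by the same support argument, now with nonzero vectors from both $P_0$ and $P_1$ in $U$. The second has $N=N'\in P_\infty$, giving a decomposable difference. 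When $G=G'$ has alternating rank at least four, injectivity of $N\mapsto N\wedge G$ gives $N=N'$ and again a decomposable difference modulo $R$. Only after this is $G$ a single rational place, and the anchored form $h=M\wedge E_0$ becomes available.
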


\begin{proof}
Compare the seed product and the low--low gate-five product using
Lemma~\ref{lem:lownf}.  Let their normal-form parameters be
$(G,N,z)$ and $(G',N',z')$.  Their cubic highs are equal and their
quadratic difference represents a nonzero target coset modulo $R$.

Suppose first $G\ne G'$.  Comparing the three direct cubic components
$I_0,I_1,I_\infty$, the only nonzero intersection patterns, up to
$S_3$ and swapping the two products, are
\[
 (G,G')=(r_0,r_0+r_1)
 \quad\text{or}\quad
 (r_0+r_1,r_0+r_1+r_\infty).
\]
In the first pattern,
\[
 N'=w\in P_1,\qquad N=w+p,\quad p\in P_0.
\]
The contraction difference in \eqref{eq:nfquad} lies in $R$, and the
quadratic difference modulo $R$ belongs to
\[
 U\wedge L,\qquad U=\langle w,p\rangle.
\]
If $p=0$ it is decomposable and cannot give a target outside $R$.
If $p\ne0$, a target outside $R$ in $U\wedge L$ has alternating rank
at most four and therefore appears in Lemma~\ref{lem:ranktwo}.
Its rank-four support would have to contain both a nonzero vector of
$P_0$ and a nonzero vector of $P_1$, which none of
$K_0,K_1,K_\infty,K_{\mathrm{deg}\,2}$ does.
The second intersection pattern has
$N=N'=w\in P_\infty$; the contraction difference lies in $R$ and the
remaining quadratic difference is decomposable.  Again no new target
is possible.

Hence $G=G'$.  If $G$ has alternating rank at least four, equality
$N\wedge G=N'\wedge G$ gives $N=N'$, and
\eqref{eq:nfquad} makes the quadratic difference decomposable modulo
$R$.  So $G$ must be a singleton rational place, say $r_0$.

Then
\[
 p=N+N'\in P_0.
\]
If $p=0$, the same decomposable-target contradiction applies.
Otherwise the quadratic difference modulo $R$ lies in
\[
 U\wedge L,\qquad U=\langle N,p\rangle,\quad\dim U=2.
\]
Any target representative outside $R$ therefore has Hankel rank two.
By Lemma~\ref{lem:ranktwo} and Lemma~\ref{lem:support}, its support
must contain $p\in P_0$.  The only outside-$R$ rank-two target forms
with this property are the two tangent forms at $0$, supported on
\[
 K_0=P_0+\langle a_1,b_1\rangle.
\]
Thus $N\in K_0$ and $N\bmod P_0$ is a nonzero vector of
$\langle a_1,b_1\rangle$.  The two tangent forms differ by $E_0\in R$,
so the new target coset is precisely $E_1+R$.
\end{proof}

Thus the state after gate five may be written
\[
 V_5=\Aff+S+\langle g\rangle,
\qquad
 S=\langle E_0,E_1,E_6,r_1\rangle,
\]
and
\begin{equation}\label{eq:hardann}
 h=M\wedge E_0,\qquad
 \Ann_T(h)=\langle E_0,E_1,E_2\rangle.
\end{equation}
The unique annihilator direction not already represented in $S$ is
\[
 E_2=D^{(2)}_0C.
\]

\section{First-order feedback cannot expose the second jet}\label{sec:saturation}

We prove that gate six cannot be useful.

\subsection{A jet-separation lemma}

Keep
\[
 x=a_0,\quad u=a_1,\quad y=b_0,\quad v=b_1,
\qquad
 K_0=\langle x,u,y,v\rangle,\quad
 P_0=\langle x,y\rangle.
\]

\begin{lemma}[jet separation]\label{lem:JS}
If
\[
 U\le K_0,\qquad \dim U\le2,\qquad U\ne P_0,
\]
then
\[
 T\cap\bigl(S+\Lambda^2K_0+U\wedge L\bigr)=S.
\]
\end{lemma}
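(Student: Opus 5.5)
The plan is to prove the inclusion $T\cap(S+\Lambda^2K_0+U\wedge L)\subseteq S$ by splitting $\Lambda^2L$ into pieces adapted to $K_0$; the reverse inclusion holds because $S\subseteq T$. Choose the complement
\[
 K'=\Span\{a_2,a_3,b_2,b_3\},\qquad L=K_0\oplus K'.
\]
This gives the direct decomposition
\[
 \Lambda^2L=\Lambda^2K_0\oplus(K_0\wedge K')\oplus\Lambda^2K'.
\]
Since $U\le K_0$, every element of $\Lambda^2K_0+U\wedge L$ has zero $\Lambda^2K'$ component. Its $K_0\wedge K'$ component lies in $U\wedge K'$. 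Since $S\subseteq T$, it suffices to take an arbitrary target form $t=\sum_s c_sE_s\in T$ with $t\in\Lambda^2K_0+U\wedge L$ (absorbing the $S$-summand into $t$), and to show that $c_2=\dots=c_6=0$. Then $t\in\langle E_0,E_1\rangle\subseteq S$.

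\emph{Step 1: the $\Lambda^2K'$ component.} The $\Lambda^2K'$ component of $t$ is read off directly. It is $c_4\,a_2b_2+c_5(a_2b_3+a_3b_2)+c_6\,a_3b_3$, and these are independent monomials. Its vanishing forces $c_4=c_5=c_6=0$.

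\emph{Step 2: the mixed component.} With $c_4=0$, the only $E_s$ contributing to $K_0\wedge K'$ are $E_2$ and $E_3$. The mixed component is
\[
 c_2\,(x\wedge b_2+a_2\wedge y)+c_3\,(x\wedge b_3+u\wedge b_2+a_2\wedge v+a_3\wedge y).
\]
Regard an element of $K_0\wedge K'\cong K_0\otimes K'$ as a linear map $K'^{*}\to K_0$. Membership in $U\wedge K'$ is then equivalent to this map having image inside $U$. Pairing with the dual basis vectors of $b_2,a_2,b_3,a_3$ gives the image
\[
 \Span\{c_2x+c_3u,\;c_2y+c_3v,\;c_3x,\;c_3y\}.
\]
If $c_3=1$, this image is all of $K_0$, which has dimension four. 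That is impossible because $\dim U\le2$, so $c_3=0$. If $c_2=1$, the image is exactly $P_0$. Then $P_0\subseteq U$ and $\dim U\le2$ force $U=P_0$, which the hypothesis excludes, so $c_2=0$.

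Combining the two steps, $t=c_0E_0+c_1E_1\in S$, which proves the lemma. The only step needing care is Step 2: the hypotheses $\dim U\le2$ and $U\ne P_0$ enter exactly there. The $E_3$ part has a rank-four mixed image, which rules it out by dimension. The $E_2$ part has mixed image exactly $P_0$, which is the \emph{second-jet} direction that the excluded plane $U=P_0$ would otherwise let through. Everything else is reading off monomial coefficients, so I do not expect any further obstacle.
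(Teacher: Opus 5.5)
Your proof is correct and follows essentially the paper's route: you kill the $\Lambda^2\langle a_2,a_3,b_2,b_3\rangle$ coefficients, then require the four $K_0$-vectors paired with $a_2,a_3,b_2,b_3$ (namely $c_2y+c_3v$, $c_3y$, $c_2x+c_3u$, $c_3x$) to lie in $U$. The differences are cosmetic: you work with a general $t\in T$ instead of a basis adapted to $S$, and you rule out the $E_3$ coefficient by a dimension count instead of invoking $U\ne P_0$.
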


\begin{proof}
Modulo $S$, write an arbitrary target as
\[
 A E_2+B E_3+C E_4.
\]
Suppose
\[
 q=
 AE_2+BE_3+CE_4+
 d_0E_0+d_1E_1+d_6E_6+d\,r_1
 \in\Lambda^2K_0+U\wedge L.
\]
The right side has no component in
$\Lambda^2\langle a_2,a_3,b_2,b_3\rangle$.
The coefficients of $a_2b_3$ and $a_3b_2$ give $d=0$;
the coefficient of $a_2b_2$ then gives $C=0$; and the coefficient of
$a_3b_3$ gives $d_6=0$.
Absorb the remaining $E_0,E_1$ into $\Lambda^2K_0$.

For $AE_2+BE_3$, the four $K_0$-coefficient vectors paired with
$a_2,a_3,b_2,b_3$ are
\[
 Ay+Bv,\qquad By,\qquad Ax+Bu,\qquad Bx.
\]
All belong to $U$.  If $B=1$, then $x,y\in U$, so
$U=P_0$, excluded.  Hence $B=0$.  If $A=1$, the first and third
vectors again give $x,y\in U$, the same contradiction.  Thus
$A=B=C=0$.
\end{proof}

\subsection{Low--low gate six}

\begin{lemma}[zero-wedge structure in $S$]\label{lem:Szero}
For $Q,C\in S$,
\[
 Q\wedge C=0
\]
implies either $Q,C$ are dependent or
\[
 Q,C\in\langle E_0,E_1\rangle.
\]
\end{lemma}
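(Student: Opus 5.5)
The plan is to reduce the statement to a Plücker-coordinate computation in the four-dimensional space $S=\langle E_0,E_1,E_6,r_1\rangle$. Write
\[
 Q=q_0E_0+q_1E_1+q_6E_6+q\,r_1,\qquad
 C=c_0E_0+c_1E_1+c_6E_6+c\,r_1 .
\]
By Remark~\ref{rem:square}, every diagonal term $X\wedge X$ vanishes, so bilinearity gives
\[
 Q\wedge C=\sum_{\{i,j\}} p_{ij}\,(X_i\wedge X_j),
 \qquad p_{ij}=q_ic_j+q_jc_i,
\]
where $X_i$ runs over the basis $E_0,E_1,E_6,r_1$. The $p_{ij}$ are the $2\times2$ minors of the $2\times4$ coefficient matrix of $(Q,C)$. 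Since we are in characteristic two, the signs are irrelevant.

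\emph{Step 1: the basis four-forms.} First observe that
\[
 E_0\wedge E_1=a_0b_0\wedge(a_0b_1+a_1b_0)=0,
\]
because every term repeats $a_0$ or $b_0$. I then claim that the remaining five four-forms
\[
 E_0\wedge E_6,\quad E_0\wedge r_1,\quad E_1\wedge E_6,\quad E_1\wedge r_1,\quad E_6\wedge r_1
\]
are linearly independent in $\Lambda^4L$. To verify this, I would choose five witness monomials and check that the $5\times5$ coefficient matrix is triangular with ones on the diagonal.
\begin{itemize}
\item $a_0b_0a_3b_3$ occurs in $E_0\wedge E_6$. It also occurs in $E_0\wedge r_1$ and $E_6\wedge r_1$.
\item A monomial $a_0b_0a_ib_j$ with $\{i,j\}\ne\{3,3\}$ and $i,j\ge1$, such as $a_0b_0a_1b_2$, occurs only in $E_0\wedge r_1$.
\item $a_0b_1a_3b_3$ occurs in $E_1\wedge E_6$ and in $E_1\wedge r_1$.
\item A monomial $a_0b_1a_2b_2$ occurs in $E_1\wedge r_1$ but in no other form involving $E_6$ or $E_0$.
\item $a_1b_2a_3b_3$ occurs in $E_6\wedge r_1$ and in neither $E_0$- nor $E_1$-products.
\end{itemize}
Ordering the forms suitably should give a unitriangular matrix. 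This bookkeeping is the main obstacle. The worry is that $r_1$ contains all sixteen $a_ib_j$, so one must pick witnesses avoided by the $E_0,E_1$ factors. If a chosen witness fails, I would instead pick monomials avoiding $a_0,b_0$ where possible, or simply row-reduce the five forms over the sixteen relevant monomials.

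\emph{Step 2: Plücker conclusion.} Given the independence, $Q\wedge C=0$ forces $p_{ij}=0$ for every pair except $\{0,1\}$. Now split into two cases.
\begin{itemize}
\item If $p_{01}=0$ as well, all minors vanish, so the coefficient matrix has rank at most one and $Q,C$ are dependent.
\item If $p_{01}\ne0$, the columns indexed by $E_0,E_1$ are independent and span $\F^2$. Then $p_{0k}=p_{1k}=0$ for $k\in\{6,r_1\}$ forces column $k$ to lie in the span of two independent columns while having zero minor with each. Hence column $k$ is zero, i.e.\ $q_6=c_6=q=c=0$, and $Q,C\in\langle E_0,E_1\rangle$.
\end{itemize}
This proves the dichotomy.

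Note that the argument never treats $G\wedge G$ as a rank detector, in keeping with Remark~\ref{rem:square}. The only characteristic-two input is that diagonal terms drop out of the expansion.
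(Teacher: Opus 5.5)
Your proposal is correct and follows essentially the paper's route. You show that $E_0\wedge E_1=0$ while the other five pair wedges are independent via a unitriangular witness matrix, and then read off the dichotomy from the Plücker coordinates; your explicit column argument simply spells out the paper's remark that the kernel of $\Lambda^2S\to\Lambda^4L$ is the line $\langle E_0\wedge E_1\rangle$.

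One bookkeeping correction: $a_0a_1b_0b_2$ also occurs in $E_1\wedge r_1$ (via $a_1b_0\wedge a_0b_2$), and $a_0a_3b_1b_3$ also occurs in $E_6\wedge r_1$. In your listed order both extra entries lie above the diagonal, so your five witnesses still give a unitriangular matrix.
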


\begin{proof}
In the basis $(E_0,E_1,E_6,r_1)$ of $S$, the only relation among the
six pair wedges is $E_0\wedge E_1=0$; the other five are independent.
For example, on the five monomials
\[
a_0a_1b_0b_1,\quad
a_0a_1b_0b_2,\quad
a_0a_3b_0b_3,\quad
a_0a_3b_1b_3,\quad
a_0a_3b_2b_3
\]
their coefficient matrix is
\[
\begin{pmatrix}
0&1&0&0&0\\
0&1&0&1&0\\
1&1&0&0&1\\
0&0&1&1&1\\
0&0&0&0&1
\end{pmatrix},
\]
which has determinant one.  Hence the kernel of
$\Lambda^2S\to\Lambda^4L$ is the line
$\langle E_0\wedge E_1\rangle$.  If the Pl\"ucker form of
$\langle Q,C\rangle$ is nonzero and lies on that line, the two-plane
itself is $\langle E_0,E_1\rangle$.
\end{proof}

\begin{proposition}[no low--low second feedback]\label{prop:nolowsecond}
No product of two wires in $\Aff+S$ is a useful gate six.
\end{proposition}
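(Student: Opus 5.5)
Write the two factors as
\[
 u=\alpha+\ell+Q,\qquad v=\beta+m+C,\qquad Q,C\in S,\ \ell,m\in L,
\]
and assume the product $uv$ is useful for the state $V_5=\Aff+S+\langle g\rangle$. The plan mirrors Proposition~\ref{prop:prefix}, with $S$ in place of $R$. The only new target direction must come out of the quadratic part modulo $S$, and we will show that this quadratic part lies in $S+\Lambda^2K_0+U\wedge L$ for an admissible $U$. Lemma~\ref{lem:JS} then kills it.

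\emph{Quartic part.} The quartic top of $uv$ is $Q\wedge C$. The only high-degree direction available in $V_5$ is the cubic $h=M\wedge E_0$, so $Q\wedge C=0$. By Lemma~\ref{lem:Szero}, either $Q,C$ are dependent, or $Q,C\in\langle E_0,E_1\rangle$.

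\emph{Dependent case.} Here $Q=\varepsilon G$ and $C=\delta G$ with $G\in S$. The cubic part is $N\wedge G$ with $N=\delta\ell+\varepsilon m$, and it must equal $0$ or $h$.
\begin{itemize}
\item If $G$ has Hankel rank at least two, wedge by $G$ is injective, except when $G$ is a first-jet form $E_1+\varepsilon' E_0$ whose $K_0$-support could absorb $N$. In the injective case $N=0$, or else $N\wedge G=M\wedge E_0$, which forces the $R$-components to be compared. The quadratic part is then $\ell\wedge m=z\wedge N$ plus a contraction. Since $N\in K_0$, or $N=0$, this lies in $\Lambda^2K_0+\langle N\rangle\wedge L$, with $U=\langle N\rangle$ one-dimensional.
\item If $G=r_\theta$, we argue as in Lemma~\ref{lem:lownf} and Proposition~\ref{prop:firstjet}. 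Matching $N\wedge r_\theta$ with $0$ or $h$ forces $\theta=0$ and $N\in K_0$, or else leaves no target at all.
\end{itemize}
The contractions $\kappa_G(N)$ for $N\in K_0$ and $G\in\langle E_0,E_1\rangle$ land in $\Lambda^2K_0$. The term $G^2=G$ lies in $S$.

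\emph{Independent case} ($Q,C$ spanning $\langle E_0,E_1\rangle$). The cubic part is $\ell\wedge C+m\wedge Q$, and it must lie in $\langle h\rangle$. Both two-forms are supported in $K_0$, and $E_1$ has rank four there. Comparing components outside $\Lambda^3K_0$ therefore forces $\ell,m\in K_0+(\text{something absorbed by }E_0)$, which is essentially $\ell,m\in K_0$ modulo $P_0$-adjustments. All quadratic contributions, $\ell\wedge m$ together with the contractions, then lie in $\Lambda^2K_0+U\wedge L$ with $\dim U\le2$.

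\emph{Excluding $U=P_0$.} In every case we must check $U\neq P_0$. $U$ is built from $N$, or from differences of linear parts, and $h=M\wedge E_0$ has $[M]\neq0$ modulo $P_0$. If $U=P_0$ the quadratic part would only involve $x\wedge L+y\wedge L$. That piece meets $T$ only in $\langle E_0,E_1\rangle$ plus things already in $S$, and we verify this directly, in the style of the coefficient check in Lemma~\ref{lem:JS}.

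\emph{Using the seed.} If the gate-six product uses $h$, that is, if the cubic part equals $h$, we subtract $g$ and use the seed representative \eqref{eq:seedrep}. This contributes $M\wedge z$ plus a contraction. Since $[M]\in\langle a_1,b_1\rangle$ modulo $P_0$, $M\in K_0$, so this term lies in $\langle M\rangle\wedge L$. We enlarge $U$ by $M$. The danger is that $U$ becomes three-dimensional, or equals $P_0$.

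\emph{Main obstacle.} The hardest step is this last bookkeeping: showing that after absorbing both $N$ and $M$, the relevant $U$ is still at most two-dimensional and not $P_0$. I would handle it by noting that the cubic match $N\wedge G=M\wedge E_0$ forces $N\equiv M$ modulo $P_0$ (when $G=E_0$). Then $U=\langle M,N+M\rangle$ with $N+M\in P_0$, and $U$ cannot be $P_0$ because $M\notin P_0$. Once these membership conditions are established, Lemma~\ref{lem:JS} gives $T\cap(\text{new span})=S$, so $d_T$ does not rise and gate six is not useful.
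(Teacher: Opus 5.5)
Your outline follows the paper's proof. You force $Q\wedge C=0$, split via Lemma~\ref{lem:Szero} into a common $G$ or $\langle E_0,E_1\rangle$, place the cancelled quadratic shadow in $S+\Lambda^2K_0+U\wedge L$ with $U=\langle M,N\rangle$ or $\langle M\rangle$, and apply Lemma~\ref{lem:JS}. The gap is in the dependent branch whose cubic part must cancel $h$, which is not established as written.

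\textbf{The $h$-cancelling dependent branch.} Your claimed exception to injectivity is wrong. $E_1$ and $E_0+E_1$ have alternating rank four, so $N\mapsto N\wedge G$ is injective for them too. The real question is for which $G\in S$ the equation $N\wedge G=h$ has any solution. ``Comparing $R$-components'' does not answer this for $G\notin R$, such as $E_1+r_1$ or $E_1+E_6$, because $N\wedge E_1$ need not lie in $I_0\oplus I_1\oplus I_\infty$. So your assertion $N\in K_0$, which is exactly what Lemma~\ref{lem:JS} needs, is unsupported.

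The missing line is $G\wedge h=N\wedge G\wedge G=0$ by Remark~\ref{rem:square}. Hence $G\in S\cap\Ann_T(h)=\langle E_0,E_1\rangle$ by \eqref{eq:hardann}. Solving explicitly:
\begin{itemize}
\item $G=E_0$ gives $N\equiv M\pmod{P_0}$;
\item $G=E_1$ or $E_0+E_1$ gives $N=\alpha x+\beta y\in P_0$, where $M\equiv\alpha u+\beta v\pmod{P_0}$.
\end{itemize}
In all three cases $N\in K_0$ and $\kappa_G(N)\in\Lambda^2K_0$. Also $U=\langle M,N\rangle\ne P_0$, because $M\notin P_0$. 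Your last paragraph treats only $G=E_0$.

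\textbf{The fallback for $U=P_0$ is false.} Once the $\Lambda^2K_0$ summand that your own bookkeeping produces is included,
$E_2=a_0\wedge b_2+a_1\wedge b_1+a_2\wedge b_0\in\Lambda^2K_0+P_0\wedge L$.
This is precisely why Lemma~\ref{lem:JS} excludes $U=P_0$. So $U\ne P_0$ can only come from $M\in U$, never from a direct check. In the zero-high branch no $U$ arises at all: either $N=0$, or the new term is a decomposable target and hence a rational place, or the product lies in $\Lambda^2K_0$, where $T\cap\Lambda^2K_0=\langle E_0,E_1\rangle$.

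\textbf{The independent case has no ``$P_0$-adjustments''.} Write $\ell=\ell_0+\sum_w\lambda_w w$ and $m=m_0+\sum_w\mu_w w$, with $\ell_0,m_0\in K_0$ and $w$ ranging over $a_2,a_3,b_2,b_3$.
\begin{itemize}
\item The cubic terms involving $w$ are $w\wedge(\lambda_wC+\mu_wQ)$.
\item These must vanish, since the cubic part is $0$ or $h\in\Lambda^3K_0$.
\item Independence of $Q,C$ then forces $\ell,m\in K_0$ exactly.
\end{itemize}
The whole product is then a Boolean function of the $K_0$ variables. Adding the seed gives $U=\langle M\rangle$, which disposes of your worry that $U$ becomes three-dimensional.

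Your cubic part there also omits the Boolean contraction of $QC$; for example, $E_0E_1$ contributes $(u+v)\wedge x\wedge y$ in degree three. It lies in $\Lambda^3K_0$ and is harmless. With these repairs your argument coincides with the paper's.
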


\begin{proof}
Let the quadratic parts of the two factors be $Q,C\in S$.

First suppose the product has zero high part.  If $Q,C$ are dependent,
write them as multiples of a common $G$.  The cubic part has the form
$N\wedge G$.  If $G$ has alternating rank at least four, vanishing
forces $N=0$ and no target escapes $S$.  If $G$ is decomposable, then
by Lemma~\ref{lem:rankone} it is one of $r_0,r_1,r_\infty$.
Vanishing gives $N$ in its support plane, and the remaining possible
new quadratic term is decomposable; if it lies in $T$ it is again a
rational place and hence belongs to $S$.

If $Q,C$ are independent, Lemma~\ref{lem:Szero} puts them in
$\langle E_0,E_1\rangle\subset\Lambda^2K_0$.
Any linear component outside $K_0$ would create an uncancelled cubic
monomial because $Q,C$ are independent.  Thus the whole product is
supported in $K_0$, and
\[
 T\cap\Lambda^2K_0=\langle E_0,E_1\rangle\subset S.
\]
So a zero-high product cannot be useful.

Now suppose the product has high part $h$, so adding the seed cancels
it.  If $Q,C$ are dependent, use a common $G$.  Since
$G\wedge h=0$,
\[
 G\in S\cap\Ann_T(h)=\langle E_0,E_1\rangle.
\]
Write
\[
 M_0=\alpha u+\beta v\ne0
 \quad\text{mod }P_0.
\]
The equation $N\wedge G=h$ has the following possibilities:
\[
\begin{array}{c|c}
G&N\\ \hline
E_0&M_0+p,\quad p\in P_0,\\
E_1&\alpha x+\beta y,\\
E_0+E_1&\alpha x+\beta y.
\end{array}
\]
The quadratic shadow of the low product lies in
\[
 S+\Lambda^2K_0+N\wedge L.
\]
By the anchored normal form \eqref{eq:seedrep}, the seed shadow lies
in $S+M\wedge L$ for some representative
$M=M_0+p_0\in K_0$.  Hence the cancelled low result lies in
\[
 S+\Lambda^2K_0+U\wedge L,\qquad U=\langle M,N\rangle.
\]
Here $\dim U\le2$ and $U\ne P_0$, because $M$ has the nonzero
$M_0$ component.  Lemma~\ref{lem:JS} returns the target to $S$.

If $Q,C$ are independent, they span
$\langle E_0,E_1\rangle$.  Comparing every variable outside $K_0$
in the cubic equation forces both linear factor parts into $K_0$.
The product quadratic part is then in $\Lambda^2K_0$, and after
adding the seed the result lies in
\[
 S+\Lambda^2K_0+\langle M\rangle\wedge L.
\]
Again Lemma~\ref{lem:JS} applies, now with the one-dimensional
$U=\langle M\rangle$.
\end{proof}

\subsection{A gate-six product using the seed}

\begin{proposition}[no $g$-using second feedback]\label{prop:nogsecond}
No gate six using $g$ is useful.
\end{proposition}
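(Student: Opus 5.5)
We follow the template of Propositions~\ref{prop:quartic} and~\ref{prop:nogfirst}, now starting from the enlarged state $V_5=\Aff+S+\langle g\rangle$.

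\emph{Reduction to one seed factor.} A product with two $g$-containing factors differs, modulo the state, from one with exactly one such factor, since $U(U+V)=UV+U$. So we write
\[
 U=g+a,\qquad a,c\in\Aff+S .
\]
After toggling $c$ by $1$ if necessary, a useful gate six has a low representative
\[
 F=Uc\in\Aff+T,\qquad F\notin\Aff+S .
\]
Boolean idempotence then gives $UF=F$ and $Fc=F$.

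\emph{Locating $F_2$.} The top of $UF$ is $h\wedge F_2$ in degree five. This must vanish, because $F$ is low and $U$ carries the high part $h$ up to lower terms. Hence $F_2\in\Ann_T(h)=\langle E_0,E_1,E_2\rangle$ by \eqref{eq:hardann}. Since $F\notin\Aff+S$, modulo $S$ we get
\[
 F_2=E_2+\alpha E_1+\beta E_0 .
\]
This is exactly the second-jet form treated in the first paragraph of the proof of Proposition~\ref{prop:nogfirst}.

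\emph{Constraints on $c$.} Write $c=\delta+L+C$ with $C\in S$ rather than $R$. Degree four of $Fc=F$ gives $F_2\wedge C=0$. We must recheck this kernel on $S=\langle E_0,E_1,E_6,r_1\rangle$, which has the extra column $E_1$. The wedge $E_2\wedge E_1$ contributes monomials such as $a_0a_2b_1b_1$-type terms in the $a_0,a_1,a_2,b_0,b_1,b_2$ block. Together with the three displayed rows (or a fourth monomial such as $a_0a_2b_0b_1$), we expect a nonsingular $4\times4$ minor, forcing $C=0$. The leading $3\times3$ Hankel minor of $F_2$ is nonsingular, so $\rk M(F_2)=6$ and wedge by $F_2$ is injective on vectors; degree three then forces $L=0$. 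Thus $c$ is constant, $Fc=F\neq 0$ forces $c=1$, and $F=U$ retains the nonzero high part $h$ of the seed. This is a contradiction.

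\emph{Expected obstacle.} The main risk is the kernel computation $F_2\wedge C=0$ on $S$. If the $4\times4$ minor turns out singular, some $C\in\langle E_0,E_1\rangle$ survives. In that case we fall back on the slice argument of Proposition~\ref{prop:nogfirst}:
\begin{itemize}
\item the degree-three part forces $c$ to depend only on the variables in $K_0$;
\item restricting to slices of $(x,y,u,v)$, the complementary linear parts of $F$ vary through $E_2$'s contributions $a_2y+b_2x$, which lie outside $K_0$;
\item the seed representative \eqref{eq:seedrep} contributes slice-varying terms only in $\langle m,n\rangle$, which is constrained as in \eqref{eq:Qfeedback} (or is trivial);
\item hence it cannot produce the $a_2,b_2$ differences.
\end{itemize}
This is the same exterior jet-separation idea as Lemma~\ref{lem:JS}: at the place $0$, first-order seed data cannot generate the second jet $E_2$.
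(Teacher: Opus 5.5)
Your argument is the paper's proof step for step. You reduce to $F=Uc$ with a single seed factor, use degree five of $UF=F$ with \eqref{eq:hardann} to force $F_2=E_2+\alpha E_1+\beta E_0$, kill $C\in S$ by degree four of $Fc=F$, and kill $L$ by rank-six injectivity, which leaves $F=U$.

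The minor you only anticipated does hold, so $C=0$ unconditionally and the slice fallback is never needed. On your rows $a_0a_1b_0b_1$, $a_0a_2b_0b_1$, $a_2a_3b_0b_1$, $a_2a_3b_0b_3$ and the columns $(E_0,E_1,E_6,r_1)$, the coefficient rows are $(1,0,0,1+\beta)$, $(0,1,0,1+\alpha+\beta)$, $(0,0,0,1)$, $(0,0,1,1)$, with determinant one for every $\alpha,\beta$. This is exactly the minor the paper displays. Note also that $E_2\wedge E_1=a_0a_1b_0b_2+a_0a_2b_0b_1$; there are no ``$b_1b_1$'' terms, since a repeated variable gives zero in the exterior algebra.
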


\begin{proof}
Normalize the product to
\[
 F=Uc\in\Aff+T,\qquad U=g+a,\quad a,c\in\Aff+S,
\]
as before.  The identities $UF=F$ and $Fc=F$ still hold.
Degree five of $UF=F$ gives
\[
 F_2\in\Ann_T(h)\setminus S.
\]
Using \eqref{eq:hardann},
\[
 F_2=E_2+\alpha E_1+\beta E_0.
\]

Write $c=\delta+L+C$, $C\in S$.  Degree four of $Fc=F$ gives
$F_2\wedge C=0$.  Wedge by any of the four possible $F_2$ is
injective on $S$.  One explicit $4\times4$ minor, on the columns
$(E_0,E_1,E_6,r_1)$, is
\[
\begin{pmatrix}
1&0&0&1+\beta\\
0&1&0&1+\alpha+\beta\\
0&0&0&1\\
0&0&1&1
\end{pmatrix},
\]
with determinant one.  Hence $C=0$.

Degree three now gives $F_2\wedge L=0$.  The Hankel matrix
\[
 H(F_2)=
 \begin{pmatrix}
 \beta&\alpha&1&0\\
 \alpha&1&0&0\\
 1&0&0&0\\
 0&0&0&0
 \end{pmatrix}
\]
has rank three, so $F_2$ has alternating rank six and wedge by
$F_2$ is injective on vectors.  Therefore $L=0$.  Thus $c$ is
constant; $Fc=F\ne0$ forces $c=1$, whence $F=U$, contradicting the
nonzero high part of $U$.
\end{proof}

Combining Propositions~\ref{prop:nolowsecond} and
\ref{prop:nogsecond} gives the local saturation theorem.

\begin{theorem}[no second useful gate after first feedback]\label{thm:jetsaturation}
In the normalized one-defect four-term state, once the first useful
post-seed extension has been adjoined, \emph{no second useful AND
gate exists}, whether the next product is low--low or reuses the
seed.  In the rational-place coordinates of
Proposition~\ref{prop:firstjet}, the first extension is the first
Hasse jet $D^{(1)}_\theta C$; consequently the second Hasse jet
$D^{(2)}_\theta C$ is inaccessible as a special case.
\end{theorem}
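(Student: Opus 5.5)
The plan is to assemble the theorem from the propositions of Sections~\ref{sec:feedback} and~\ref{sec:saturation}, after first pinning down the state in which gate six acts. By Lemma~\ref{lem:normalized} we may assume gates one through three are $r_0,r_1,r_\infty$ and gate four is the seed $g$. By Lemma~\ref{lem:highnonzero} and Proposition~\ref{prop:quartic}, $g$ has a nonzero, purely cubic high part. Proposition~\ref{prop:nogfirst} excludes a seed-using useful gate five. Proposition~\ref{prop:firstjet} then applies to the low--low gate five: after a $PGL_2(\F)$ relabelling of the rational places, it gives $V_5=\Aff+S+\langle g\rangle$ with $S=\langle E_0,E_1,E_6,r_1\rangle$, the anchored seed representative \eqref{eq:seedrep}, and the hard annihilator \eqref{eq:hardann}. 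The new direction $E_1$ is identified there as $D^{(1)}_0C$ modulo $R$, which is the first-jet clause of the theorem.

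Next I would classify a candidate gate six. It multiplies two wires of $V_5$, so each factor has the form $\epsilon g+w$ with $w\in\Aff+S$. If neither factor involves $g$, the product is of two wires in $\Aff+S$, and Proposition~\ref{prop:nolowsecond} shows it is not useful. If both factors involve $g$, the identity $U(U+V)=UV+U$ changes the product only by an available wire, so modulo $V_5$ we may take exactly one factor to contain $g$. That case is excluded by Proposition~\ref{prop:nogsecond}. I would check carefully that the reduction preserves usefulness. Since $d_T$ depends only on the span $V_5+\langle\text{product}\rangle$, and the replacement differs from the original product by an element of $V_5$, the span and hence usefulness are unchanged. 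This settles the claim that no second useful AND gate exists.

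For the final clause, I would note from \eqref{eq:hardann} that $E_2=D^{(2)}_0C$ is the unique annihilator direction of $h$ outside $S$. Any gate exposing it would be a useful gate six, so its inaccessibility is a special case of what was just proved. The same holds at the other places by $S_3$ symmetry.

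The main obstacle I expect is not the assembly itself but confirming that the hypotheses of Propositions~\ref{prop:nolowsecond} and~\ref{prop:nogsecond} genuinely cover every normalized state. In particular, both rely on the anchored form \eqref{eq:seedrep} with $M_0\ne0$ modulo $P_0$, which Proposition~\ref{prop:firstjet} supplies, and on Proposition~\ref{prop:firstjet} having forced the hard $P'_0$ class rather than a soft class. I would trace that dependency explicitly through Proposition~\ref{prop:nogfirst}, since that proposition is what removes the soft classes.
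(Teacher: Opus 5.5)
Your proposal is correct and follows the paper's route. The paper obtains the theorem by combining Proposition~\ref{prop:nolowsecond} (low--low gate six) with Proposition~\ref{prop:nogsecond} (seed-using gate six), in the state $V_5=\Aff+S+\langle g\rangle$ with annihilator \eqref{eq:hardann} supplied by Propositions~\ref{prop:nogfirst} and~\ref{prop:firstjet}; your $U(U+V)=UV+U$ reduction is the paper's ``as before'' normalization, and you obtain the second-jet clause the same way, by reading off $E_2=D^{(2)}_0C$ as the only annihilator direction outside $S$.
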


\section{The unrestricted Boolean nine-AND theorem}\label{sec:main}

\begin{theorem}\label{thm:main}
The unrestricted XOR--AND multiplicative complexity of four-term
binary polynomial multiplication is
\[
 \boxed{\MC(\Mul_4)=9}.
\]
\end{theorem}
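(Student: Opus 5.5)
The proof splits into an explicit upper bound of nine and a lower bound excluding circuits with at most eight ANDs. The lower bound is assembled from the structural results already proved.

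\emph{Upper bound.} We write down a nine-AND circuit, for instance a one-level Karatsuba split. Put $A=A_0+z^2A_1$ and $B=B_0+z^2B_1$ with two-term halves. We then compute $A_0B_0$, $A_1B_1$ and $(A_0+A_1)(B_0+B_1)$, each with the three-AND two-term circuit of Proposition~\ref{prop:smallbase}. The seven $E_s$ are recovered by XOR recombination, which is free. Only the bookkeeping identity $E_s=\dots$ needs checking, and all nine products are bilinear, so the circuit lies in every model of the table.

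\emph{Lower bound, at most seven gates.} The seven $E_s$ are independent modulo $\Aff$, since their quadratic monomial supports are disjoint, so $d_T$ of the final wire space must be $7$. By Lemma~\ref{lem:one} each gate raises $d_T$ by at most one. Hence at least seven gates are needed, and a seven-gate circuit would have all gates useful. As in Proposition~\ref{prop:n3}, inductively every wire then stays in $W=\Aff+T$, so the circuit is quadratic. Find--Peralta's bound of nine (the same citation as in Lemma~\ref{lem:highnonzero}) excludes it. Alternatively, Proposition~\ref{prop:prefix} already shows that a fourth useful gate on top of $\Aff+R$ cannot exist. Fewer than seven gates is impossible by counting.

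\emph{Lower bound, exactly eight gates.} Suppose an eight-AND circuit computes $\Mul_4$. Lemma~\ref{lem:normalized} normalizes it: the first three gates are $r_0,r_1,r_\infty$, gate four is the unique non-useful seed $g$, and gates five through eight are useful.
\begin{enumerate}
\item By Lemma~\ref{lem:highnonzero}, $\pihi(g)\neq 0$.
\item Gate five is useful, so Proposition~\ref{prop:quartic} makes the high part purely cubic.
\item Proposition~\ref{prop:nogfirst} forces gate five to be low--low.
\item Proposition~\ref{prop:firstjet} then places the state in the normalized form $V_5=\Aff+S+\langle g\rangle$ with the hard annihilator \eqref{eq:hardann}.
\item Theorem~\ref{thm:jetsaturation} says gate six cannot be useful, either low--low (Proposition~\ref{prop:nolowsecond}) or seed-using (Proposition~\ref{prop:nogsecond}).
\end{enumerate}
This contradicts the requirement that gates five through eight all be useful, so no eight-gate circuit exists, and with the upper bound $\MC(\Mul_4)=9$.

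\emph{Main obstacle.} The delicate step is making sure the case analysis really covers every gate-six input. Gate six may multiply wires involving $g$ and $S$ simultaneously, and the split "low--low versus seed-using" must be justified modulo the current state. I would do this by the same reduction used in Proposition~\ref{prop:nogfirst}: any product of two wires of $V_5$ is congruent, modulo $V_5$, either to a product of two $\Aff+S$ wires or to $(g+a)c$ with $a,c\in\Aff+S$. This uses $U(U+V)=UV+U$ and $g^2=g$, the same Boolean idempotence applied to a single factor. I would also double-check that normalization (moving the place gates to the front) preserves the gate count and usefulness pattern, since every later step relies on it.
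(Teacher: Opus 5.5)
Your proposal is correct and is essentially the paper's proof. You use the same nine-gate circuit: your Karatsuba split with three-AND half products is exactly the paper's $g_1,\dots,g_9$. You also use the same eight-gate exclusion chain: Lemma~\ref{lem:normalized}, Lemma~\ref{lem:highnonzero}, Proposition~\ref{prop:quartic}, Proposition~\ref{prop:nogfirst}, Proposition~\ref{prop:firstjet} and Theorem~\ref{thm:jetsaturation}, with the gate-six low--low versus seed-using split justified by the same $U(U+V)=UV+U$ reduction the paper uses. Your separate treatment of circuits with at most seven gates, via the all-useful quadratic reduction or Proposition~\ref{prop:prefix}, just makes explicit a case the paper's proof leaves implicit when it assumes an eight-AND circuit.
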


\begin{proof}
For the lower bound, assume an eight-AND circuit exists.
Lemma~\ref{lem:normalized} puts the three rational-place products in
gates one through three and the unique non-useful seed in gate four;
gates five through eight must all be useful.
Lemma~\ref{lem:highnonzero} makes the seed genuinely nonlinear, and
Proposition~\ref{prop:quartic} makes its high part cubic.
Proposition~\ref{prop:nogfirst} says gate five cannot use the seed.
Proposition~\ref{prop:firstjet} therefore forces a first-order
rational tangent and installs the first Hasse jet.
Theorem~\ref{thm:jetsaturation} says gate six cannot be useful, a
contradiction.

For the upper bound, use a two-level Karatsuba--Ofman
construction~\cite{KaratsubaOfman62}.
Let
\[
\begin{array}{lll}
g_1=a_0b_0,&g_2=a_1b_1,&
g_3=(a_0+a_1)(b_0+b_1),\\
g_4=a_2b_2,&g_5=a_3b_3,&
g_6=(a_2+a_3)(b_2+b_3),\\
g_7=(a_0+a_2)(b_0+b_2),&
g_8=(a_1+a_3)(b_1+b_3),&
g_9=(a_0+a_1+a_2+a_3)(b_0+b_1+b_2+b_3).
\end{array}
\]
Then
\[
\begin{aligned}
E_0&=g_1,\\
E_1&=g_3+g_1+g_2,\\
E_2&=g_2+g_7+g_1+g_4,\\
E_3&=(g_9+g_7+g_8)+(g_3+g_1+g_2)+(g_6+g_4+g_5),\\
E_4&=g_4+g_8+g_2+g_5,\\
E_5&=g_6+g_4+g_5,\\
E_6&=g_5.
\end{aligned}
\]
Thus nine AND gates suffice.
\end{proof}

\begin{remark}[what is new]
Kaminski's classical result and the later Hankel/code formulations give
the tight value nine in polynomial-multiplication, bilinear, and
quadratic settings~\cite{Kaminski85,FindPeralta19}.  They do not imply
Theorem~\ref{thm:main}: unrestricted Boolean circuits may multiply
nonlinear intermediate functions and exploit the identities
$x_i^2=x_i$.  Boyar--Find explicitly identify as open over $\F$ whether
quadratic-circuit lower bounds for vector-valued quadratic Boolean
functions persist for general circuits~\cite{BoyarFind18}.  Theorem
\ref{thm:main} establishes such persistence for the natural
polynomial-multiplication function $\Mul_4$.
\end{remark}

\section*{Lean formalization and reproducibility}

The complete exact theorem has a formalization in Lean~4
\cite{deMouraUllrich21}, using mathlib~\cite{Mathlib20}.  The development
represents a squarefree monomial by a finite set of variables and defines
Boolean-ANF multiplication by set union.  Simultaneous evaluation is proved
to be a linear equivalence between these canonical ANFs and all Boolean
functions.  An unrestricted circuit with $r$ AND gates is then a sequence of
products whose two factors lie in the span of the affine inputs and the
outputs of earlier gates.  Thus nonlinear feedback and Boolean idempotence
are part of the formal specification, rather than assumptions imported from
the prose proof.

The principal exported declaration is
\begin{center}
\path{UnrestrictedBooleanMul.N4.mc_mul_four : MC(Mul 4) = 9}.
\end{center}
Its lower-bound dependency chain includes the dimension obstruction for
seven gates, normalization of a hypothetical eight-gate circuit, the cubic
seed analysis, first-jet forcing, and feedback saturation; the upper bound is
an explicit nine-gate circuit.  The same project exports exact theorems for
$n=0,1,2,3$.  The pinned release uses Lean \path{v4.32.1} and a locked mathlib
revision.  A clean replay is obtained with
\begin{verbatim}
lake exe cache --cache-from=legacy get
lake build
lake env lean AxiomAudit.lean
lake env leanchecker UnrestrictedBooleanMul
\end{verbatim}
Continuous integration performs the build, axiom audit, and declaration
replay, with an additional fresh source replay on a weekly schedule.

The Lean source contains no \path{sorry}, \path{admit}, project-specific
\path{axiom}, \path{native_decide}, or \path{bv_decide}.  The explicit axiom
audit reports only Lean's standard \path{propext}, \path{Classical.choice},
and \path{Quot.sound}.  In particular, the formal proof does not import or
trust the recorded outputs of the separate Python and C++ regression checks.
The source, lock files, SHA-256 manifests, and reproduction instructions are
available at
\url{https://github.com/GregoryMorse/unrestricted-boolean-mul}; the immutable
snapshot for this manuscript is release \path{n4-arxiv-v2}, whose full commit
hash is recorded in the arXiv metadata block above.

\section*{AI assistance disclosure}

OpenAI GPT-5.6 Sol in extra-high thinking mode was used for research,
proof exploration and development, computational checking, literature and
citation verification, manuscript drafting and revision, and submission
preparation. Anthropic Opus 5 in high thinking mode was used as a referee.
The author reviewed the resulting mathematical claims, proofs, computations,
citations, code, and manuscript text and assumes full responsibility for the
final work.

\section{Toward five terms and beyond}\label{sec:n5}

For five-term inputs,
\[
 A(z)=\sum_{i=0}^4a_i z^i,\qquad
 B(z)=\sum_{j=0}^4b_j z^j,
\]
the target has dimension nine.  In the quadratic/bilinear
polynomial-multiplication model the optimum is thirteen: Montgomery gives
a division-free thirteen-product construction~\cite{Montgomery05}, and
the coding-theoretic lower bounds summarized by Find--Peralta match it
for the five-term binary case~\cite{FindPeralta19}.  The unrestricted
Boolean optimum is not inferred from that restricted result.

The rational-place prefix theorem is unchanged: over $\F$ there are
still only three rational points of $\mathbb P^1$, so every all-useful
purely quadratic prefix again stops at
\[
 R=\langle C(0),C(1),C(\infty)\rangle.
\]
What changes is the defect budget.  A hypothetical twelve-AND
computation of the nine outputs may contain three non-useful
dimensions, not one.  The n=4 proof therefore suggests the following
problem.

\begin{definition}[three-defect quadratic capacity]
For the five-term target $T_5$ and rational-place prefix $R_5$, let
$\rho_3(5)$ be the maximum target dimension recoverable when the three
non-useful quotient directions are all quadratic.  Equivalently, one
maximizes over three-dimensional quadratic defect spaces and the
decomposable fibers above their seven nonzero quotient points.
\end{definition}

A separate exact finite-geometry computation gives
\[
 \rho_3(5)=7<9.
\]
This value is \emph{not} used anywhere in the proof of
Theorem~\ref{thm:main}; unlike the $n=4$ theorem, its current status is
computational rather than fully algebraic.

\begin{conjecture}[three-defect Boolean envelope]\label{conj:multidefect}
Every five-term XOR--AND state with three total non-useful defect
dimensions satisfies
\[
 \dim(\text{recoverable target space})\le \rho_3(5)=7,
\]
even when some or all defect directions have nonzero cubic or quartic
parts.
\end{conjecture}

A positive answer would immediately rule out a twelve-AND circuit,
because such a circuit has exactly nine useful gates and three
non-useful gates, whereas all nine target dimensions would have to be
recoverable.  Thus the $n=5$ problem is not merely ``control three
high-degree directions'': the three non-useful dimensions form one
\emph{total defect budget}, mixing purely quadratic and genuinely
high-degree defects.  The structural objective is to show that
high-degree feedback cannot lift this three-defect capacity envelope.

The important issue is interaction.  One defect at one rational place
has now been analyzed completely:
\[
 C(\theta)\ \longrightarrow\ D_\theta^{(1)}C
 \quad\text{but not}\quad
 D_\theta^{(2)}C.
\]
For $n=5$, three defects might live at distinct places, or several
might interact before any one defect is saturated.  The correct
object is therefore no longer a scalar ``garbage rank.''  A more
natural invariant is a \emph{place-degree profile}: for every closed
place $P$, record the order of the jet of $C$ currently exposed there,
and charge the creation of a higher jet against the multiplicative
cost of the corresponding local algebra.

The degree-two place already appeared in the n=4 proof as the
period-three plane $D_*$.  This strongly suggests that the general
accounting should treat rational first jets and degree-two evaluation
data on the same footing.  Over $\F$, a place of degree $d$ carries
$d$ target coordinates, but multiplication in its residue algebra has
its own multiplicative cost.  The n=4 obstruction can be read as the
statement that a single rational first-order defect cannot simulate a
degree-two or second-order local multiplication for free.

A concrete n=5 program is therefore:

\begin{enumerate}[label=\textbf{N5.\arabic*.},leftmargin=3.5em]
\item Normalize the three rational-place target gates to the beginning
      of the flag, exactly as in Lemma~\ref{lem:normalized}.
\item Let $H$ be the span of the high parts of the at most three
      non-useful gates.  Classify the maps
      \[
       T_5\longrightarrow H\wedge T_5
      \]
      by rational and degree-two place support, rather than by raw
      Boolean monomials.
\item Prove a multi-defect feedback lemma: every useful gate after the
      defect space is created must expose a target class in the joint
      target annihilator of $H$.
\item Determine whether two defects at distinct rational places can
      cooperate to expose a second jet at either place.  If not, the
      n=4 saturation argument tensorizes and twelve ANDs are impossible.
\item If interaction is possible, identify the minimal local algebra
      realizing it.  That interaction, rather than another exhaustive
      lower bound, would be the genuinely new mechanism.
\end{enumerate}

The same program makes sense for arbitrary $n$.  The number of
rational places remains three, while higher-degree closed places and
higher Hasse jets enter as the target grows.  This is the main reason
to regard the n=4 proof as a structural base case rather than as a
small isolated computation.

\section{Conclusion}

The unrestricted XOR--AND flag for four-term binary polynomial
multiplication admits only a very specific kind of profitable
nonlinearity.  Purely quadratic useful prefixes are the three rational
places.  The unique high-degree defect of a hypothetical eight-gate
circuit can be normalized to a cubic rational tangent.  Its first
feedback step exposes the first Hasse derivative of the product at
that place.  Boolean idempotence and exterior jet separation then
block the second derivative.  Since an eight-gate circuit would need
four consecutive useful suffix gates, the computation cannot close.

The numerical value nine is classical for bilinear and quadratic
polynomial-multiplication models; its optimality against unrestricted
Boolean nonlinear reuse is the new lower-bound statement here.  That exact
statement, including its unrestricted semantics and full lower- and
upper-bound chain, is independently kernel-checked by the accompanying Lean~4
formalization.  The
structural message is more portable: \emph{a first-order
rational-place defect buys one first-order jet and no more}.  Whether
several such defects can interact is the next question, beginning
with five-term multiplication.

\bibliographystyle{plainnat}
\bibliography{references}

\end{document}